\let\beginalgorithm\@undefined                        
\let\endalgorithm\@undefined                  
\newcommand{\set}[1]{\left\{#1\right\}}
\newcommand{\pr}[1]{\left(#1\right)}
\newcommand{\fpr}[1]{\mathopen{}\left(#1\right)}
\newcommand{\spr}[1]{\left[#1\right]}
\newcommand{\abs}[1]{{\left|#1\right|}}
\newcommand{\enset}[2]{\left\{#1 ,\ldots , #2\right\}}
\newcommand{\funcdef}[3]{{#1}:{#2} \to {#3}}
\newcommand{\sfam}[1]{\mathcal{#1}}
\newcommand{\ifam}[1]{\mathcal{#1}}
\newcommand{\iset}[2]{{{#1}\cdots{#2}}}
\newcommand{\prob}[1]{P\fpr{#1}}
\newcommand{\mean}[2]{\operatorname{E}_{#1}\spr{#2}}
\newcommand{\ent}[1]{H\fpr{#1}}
\newcommand{\diag}[1]{\mathrm{diag}\fpr{#1}}
\newcommand{\parents}[1]{\mathit{par}\fpr{#1}}
\newcommand{\opt}[1]{\mathrm{opt}\fpr{#1}}
\newcommand{\seg}[1]{\mathrm{seg}\fpr{#1}}
\newcommand{\model}[1]{M\fpr{#1}}
\newcommand{\score}[1]{s\fpr{#1}}
\newcommand{\pemp}[1]{l\fpr{#1}}
\newcommand{\pratio}[1]{r\fpr{#1}}
\newcommand{\df}[1]{\mathrm{deg}\fpr{#1}}
\newcommand{\define}{\leftarrow}
\newcommand{\pgfnodecircletext}[4]{%
\pgfnodecircle{#1}[stroke]{#2}{#3}%
\pgfnodebox{#1text}[virtual]{\pgfrelative{\pgfxy(0,0)}{\pgfnodecenter{#1}}}{#4}{#3}{#3}}
\newtheorem{example}{Example}
\begin{document}

\title{Are your Items in Order?} 

\author{Nikolaj Tatti}
\date{Department of Mathematics and Computer Science\\Universiteit Antwerpen\\\url{nikolaj.tatti@ua.ac.be}}

\maketitle
\begin{abstract}
Items in many datasets can be arranged to a natural order. Such orders are useful
since they can provide new knowledge about the data and may ease further
data exploration and visualization. 
Our goal in this paper is to define a statistically well-founded and an
objective score measuring the quality of an order. Such a measure can be
used for determining whether the current order has any valuable information
or can it be discarded.

Intuitively, we say that the order is good
if dependent attributes are close to each other. To define the order score we
fit an order-sensitive model to the dataset. Our model resembles a Markov chain
model, that is, the attributes depend only on the immediate neighbors. The
score of the order is the BIC score of the best model. For computing the measure we
introduce a fast dynamic program.
The score is then compared against random orders: if it is better than the
scores of the random orders, we say that the order is good. We also show the
asymptotic connection between the score function and the number of free
parameters of the model. In addition, we introduce a simple greedy approach for
finding an order with a good score. We evaluate the score for synthetic and
real datasets using different spectral orders and the orders obtained with the
greedy method.
\end{abstract}



\section{Introduction}
\label{sec:intro}

Seriation, discovering a linear order for the attributes, is a popular topic in
data mining. The motivation for ordering the attributes comes from the fact
that many datasets have an inherent order, for example, the location of genes
in gene amplification data. In fields such as
paleontology~\cite{fortelius06spectral} or archaeology~\cite{kendall04abudance}
discovering the order (age) of the sites is a fundamental question. There are
many benefits once the order has been discovered. The data can visualized
as a binary matrix for further analysis. Also, the complexity of certain
data mining algorithms, for example discovering
tiles~\cite{gionis04geometric}, can be reduced when we restrict ourselves to
the discovered order.

In this paper, we study measuring the quality of a given attribute order. Such
a measure will help us to determine whether the order at hand is genuinely
significant. We propose an intuitive and novel method for measuring the
quality.  In our approach an order is good if the \emph{attributes that are
dependent of each other are close} in the given order.

\begin{example}
\label{ex:toy}
Assume that we have a dataset with 5 items, $a_1, \ldots, a_{5}$, in which the
value of attribute $a_i$ is generated from the value of the previous attribute
$a_{i - 1}$ by copying it and flipping it with a (small) probability.  We may
now conclude that here the original order is good. Whereas, for example, an
order $a_1, a_4, a_3, a_5, a_2$ is bad since the attributes $a_1$ and $a_2$ are
far away from each other.

\end{example}

Our measure is a generalization of the idea given in the example. Given a
dataset and an order we build a model that depends on the order. Our construction
will be such that model is simple and has good likelihood if the dependent attributes
are close. As a measure for goodness of the model we will use Bayesian Information
Criteria (BIC) which favors simple models that fit data well. It turns out
that we can find the model with the best BIC score through a fast dynamic program.

Also, our model can be seen as a Markov Random Field (MRF) model in which the
the items depend only on their immediate neighbors (see, for example, \cite{cowell99network} for introduction to MRF). 

We compare the score of the discovered model against the scores of random
orders, for example, we consider the probability that a random order will have
a better score than the score of the given model. The probability is close to
$0$, if the order under investigation has an exceptionally good score.

We study the asymptotic behavior of the score and show that 
asymptotically it is an increasing function of the number of free model parameters.
Such a behavior is natural since we favor simple models yet surprising
since the BIC penalty, the term through which the degree of freedom 
affects the score, vanishes as the number of data points grows. 

The rest of the paper is organized as follows. The preliminaries are given in
Section~\ref{sec:preliminaries} and the model itself is defined in
Section~\ref{sec:model}. We explain the dynamic program for finding the best in
Section~\ref{sec:dynamic}. In Section~\ref{sec:order} we consider spectral and
greedy methods for inducing the order. In Section~\ref{sec:permutation} we
compare the score with random orders. We discuss the asymptotic behavior in
Section~\ref{sec:asymptotic}. Section~\ref{sec:related} is devoted to the
related work and in Section~\ref{sec:experiments} we describe our empirical
results. Finally, we conclude the paper with a discussion in
Section~\ref{sec:conclusion}.

\section{Preliminaries and Notation}
\label{sec:preliminaries}
In this section we introduce the preliminaries and notation that we will use in
subsequent sections.

A \emph{transaction} $t \in \set{0, 1}^K$ is a binary vector of length $K$.  A
\emph{binary dataset} is a collection of $N$ transactions having the length
$K$.  We can easily visualize the dataset as a binary matrix of size $N\times
K$.  We use the notation of $\abs{D} = N$ to express the number of transactions
in $D$. An \emph{attribute} $a_i$, $i = 1, \ldots, K$, is a random
Bernoulli variable representing the $i$th element in a random transaction.
We set $A = \enset{a_1}{a_K}$ to be the collection of all attributes.

Assume that we are given a distribution $p$ defined over a space of binary
vectors $\set{0, 1}^K$.  Let $X = \enset{x_1}{x_L} \subseteq A$ be the collection
of attributes. Let $v = \set{0, 1}^L$ be a
binary vector of length $L$. We use the notation $p\fpr{X = v}$ to mean the
probability $p\pr{x_1 = v_1, \ldots, x_L = v_L}$.

Given a binary dataset $D$ we define $q_D$, an \emph{empirical distribution} to
be
\[
q_D\fpr{A = v} = \frac{\abs{\set{t \in D; t = v}}}{\abs{D}}.
\]

We assume that there is a specific linear order induced on the attributes. Such
an order can be identified with a permutation function $o$ mapping from
$\pr{1, \ldots, K}$ to $\pr{1, \ldots, K}$.  To ease the notation we often
assume that $o$ is the identity permutation, that is $o(i) = i$.  Let $X$ be a
collection of attributes, we say that $X$ is an item segment if $X$ contains only consecutive
attributes. For example, $a_{o(1)}a_{o(2)}a_{o(3)}$ is an item segment, however, $a_{o(1)}a_{o(2)}a_{o(4)}$
is not since $a_{o(3)}$ is missing.

The entropy of an item segment $X$ w.r.t. to the distribution $p$, denoted by
$\ent{X; p}$, is
\[
\ent{X; p} = - \sum_v p\fpr{X = v} \log p\fpr{X = v},
\]
where the usual convention $0 \times \log 0 = 0$ is used. All the logarithms in
this paper is of base $2$. We will shorten $\ent{A; p}$ into $\ent{p}$. We also
write $\ent{X; D}$ to mean $\ent{X; q_D}$.

\section{Order-sensitive Model}
\label{sec:model}
In this section we define our model that is based on the order of the
attributes. Informally, our approach is based on generalizing simple markov chain
model demonstrated in Example~\ref{ex:toy}. We generalize this by allowing
the item $a_i$ to depend on several previous items. However, we require that
if $a_i$ depends on $a_j$, then it also must depend on all items between $a_j$
and $a_i$. Thus, our model will be simple if the dependent items are close
to each other.

In order to make the preceeding discussion more formal, 
assume that we are given a cover of item segments $\mathcal{C} =
\enset{C_1}{C_L}$, that is, $\bigcup_i C_i = A$.  We assume that there is no
$C_i, C_j \in \mathcal{C}$ such that $C_i \subset C_j$, that is, $\mathcal{C}$
is an antichain. We also assume that $\mathcal{C}$ is ordered based on the first
attribute of each segment $C_i$. We define $\seg{o}$ to be the family of all
such collections.
Given the collection $\mathcal{C}$ we define a model $\model{\mathcal{C}}$ to be
a collection of distributions that can be expressed as
\begin{equation}
p\fpr{A} = \frac{\prod_{i=1}^L p\fpr{C_i}}{\prod_{i=1}^{L - 1} p\fpr{S_i}} 
= p\fpr{C_1}\prod_{i = 1}^{L - 1} p\fpr{C_{i + 1} - C_i \mid S_i},
\label{eq:probfact}
\end{equation}
where $S_i = C_{i + 1} \cap C_i$.
That is, the attributes in $C_i - C_{i - 1}$ depend only on their immediate
neighbors, $C_i \cap C_{i - 1}$.

\begin{example}
Assume that $\mathcal{C} = \enset{\set{a_1}}{\set{a_K}}$, that is, each segment
is simply a singleton. Then the attributes according to any distribution 
$p \in \model{\mathcal{C}}$ are independent, $p\fpr{A} = \prod_i^K p\fpr{a_i}$.

The distribution $p$ used to generate data in Example~\ref{ex:toy} can be written as
\[
p(a_1)p(a_2 \mid a_1)\cdots p(a_5 \mid a_4)
= \frac{\prod_{i = 1}^4p(a_ia_{i + 1})}{\prod_{i = 2}^4p(a_i)}.
\]
Hence, $p \in \model{\set{a_1a_2, a_2a_3, a_3a_4, a_4a_5}}$.

The other extreme is that $\mathcal{C}$ contains only one segment
containing all items, $\mathcal{C} = \set{A}$. In this case the distribution maximizing
the likelihood is the empirical distribution, $q_D\fpr{A}$.
More generally, if $\mathcal{C}$ is an ordered partition of $A$, that is, $C_i \cap C_{i
- 1} = \emptyset$, then the distribution $p \in \model{\mathcal{C}}$ has
independent components $C_i$, $p\fpr{A} = \prod p\fpr{C_i}$.  Hence we can view
the general model as a generalization of a partition of $A$ by
allowing the segments to overlap.
\end{example}

Given a dataset $D$ we define $p^*$ to be the
the unique distribution from $\model{\mathcal{C}}$ such that $p^*\fpr{C_i = t}
= q_D\fpr{C_i = t}$ for any $C_i \in \mathcal{C}$ and any binary vector $t$ of
length $\abs{C_i}$. We wish to show that $p^*$ maximizes the log-likelihood of $D$.
To see this, let $p \in \model{\mathcal{C}}$. Let us write $S_i = C_i \cap C_{i + 1}$.
Then we have
\[
\begin{split}
&\log p^*(D) - \log p(D) \\
& \quad = \abs{D} \sum_{i = 1}^L \sum_{t} p^*(A = t)\log \frac{p^*(C_i = t_{C_i})}{p(C_i = t_{C_i})}. \\
\end{split}
\]

The last equation is the (scaled) Kullback-Leibler divergence between $p^*$ and
$p$ It is always non-negative and is $0$ if and only if $p = p^*$. Hence $p^*$
maximizes the likelihood.

Let us compute the the log-likelihood of $D$ given a distribution $p^* \in
\model{\mathcal{C}}$, that is, the distribution maximizing the likelihood.  Let
$S_i = C_i \cap C_{i + 1}$. A straightforward calculation reveals that the
log-likelihood can be rewritten as a sum of entropies,

\begin{equation}
\label{eq:likelihood}
\begin{split}
\log p^*\fpr{D} & = \sum_{t \in D} \log p^*\fpr{A = t} \\
& = -\abs{D}\sum_{i=1}^L \ent{C_i; D} + \abs{D}\sum_{i=1}^{L - 1} \ent{S_i; D}.
\end{split}
\end{equation}

Our goal is to find a collection $\mathcal{C}$ for which the model 
provides high likelihood. The problem is that the model with the highest likelihood
would be always for the collection $\mathcal{C} = \set{A}$, for which the
optimal distribution is simply the empirical distribution. To keep it from this behaviour
we will use Bayesian Information Criteria (BIC) to punish more complex models
over the simple ones~\cite{schwarz78bic}. Hence, our goal is to minimize the
BIC cost function, 
\[
-\log p^*(D) + \frac{\log \abs{D}}{2} \df{\mathcal{C}},
\]
where $\df{\mathcal{C}}$ is the number of free parameters in the model
$\model{\mathcal{C}}$.  To compute the number of free parameters, let us
consider the right side of Eq.~\ref{eq:probfact}. The first component $p(C_1)$
can be parameterized with a real vector of length $2^{\abs{C_1}} - 1$.
Similarly, the $i$th component can be parameterized with a vector of length
$2^{\abs{C_i} - \abs{S_{i - 1}}} - 1$.  Since the $i$th component depends on
the values of $S_{i - 1}$ we need parameters for each combination of values of
$S_{i - 1}$. There are $2^{\abs{S_{i - 1}}}$ such values. The total number of free parameters
is then equal to
\begin{equation}
\label{eq:bic}
\begin{split}
\df{\mathcal{C}} & = 2^{\abs{C_1}} - 1 + \sum_{i = 2}^L 2^{\abs{S_{i - 1}}}\pr{2^{\abs{C_i} - \abs{S_{i - 1}}} - 1}   \\
& = \sum_{i=1}^L 2^{\abs{C_i}} - 1 - \sum_{i=1}^{L - 1} 2^{\abs{S_i}} - 1.
\end{split}
\end{equation}

In order to compute the BIC score we can combine Eqs.~\ref{eq:likelihood}--\ref{eq:bic}
in the following way.  We define a score $\score{C}$ for an item segment $C$ to
be
\[
\score{C} = \abs{D}\ent{C; D} + \frac{\log{\abs{D}}}{2} \pr{2^{\abs{C}} - 1}.
\]

Similarly, given a collection $\mathcal{C} = \enset{C_1}{C_L}$ of item segments
we define
\begin{equation}
\label{eq:scoremodeldef}
\score{\mathcal{C}} = \sum_{i=1}^L \score{C_i} - \sum_{i=1}^{L - 1} \score{S_i}, 
\end{equation}
that is, $\score{\mathcal{C}}$ is the sum of the negative likelihood of the optimal
distribution in $\model{\mathcal{C}}$ and the BIC penalty term.
Given an order $o$ we define $\score{o}$ to be the score of the best possible model,
\[
\score{o} = \min \pr{\score{\mathcal{C}} \mid \mathcal{C} \in \seg{o}}.
\]

It is easy to see that minimizing the score produces simple models having high
likelihood of the data. Note that if the dependent attributes are close to each
other, the segments will be short. Hence the BIC punishment will be small.
However, if the dependent attributes are far away, then the segments must be
long and the BIC penalty is far greater. Hence, our score favors orders in
which dependent attributes are close to each other.

\begin{example}
\label{ex:toydata1}
Assume that we have a dataset $D$ given by
\begin{equation}
D = \spr{\begin{array}{ccccc}
1  &  1  &  1  &  0  &  0 \\
1  &  0  &  1  &  0  &  0 \\
0  &  0  &  0  &  1  &  1 \\
1  &  0  &  1  &  1  &  1 \\
0  &  0  &  0  &  0  &  1 \\
\end{array}}.
\label{eq:toydata}
\end{equation}
Assume also that our model is $\mathcal{C} = \set{ab, \, bcd, \, de}$.
We have $S_1 = C_1 \cap C_2 = b$ and $S_2 = C_2 \cap C_3 = d$. The entropies
for the segments are
\[
\begin{split}
\ent{C_1} = 1.52,&\ \ent{C_2} = 2.32, \ent{C_3} = 1.52 , \\
\ent{S_1} = 0.72,& \text{ and } \ent{S_2} = 0.97.
\end{split}
\]
Hence the final score is
\[
\begin{split}
\score{\mathcal{C}} =&  5\times \pr{1.52 + 2.32 + 1.52 - 0.72 - 0.97}  \\
&+ \frac{\log 5}{2}\pr{3 + 7 + 3 - 1 - 1}  = 31.13.
\end{split}
\]
\end{example}

We will finish this section by discussing the connection between the order and
Bayesian networks. Our model can be seen as a Bayesian network, where item
$a_i$ is a parent of $a_j$, if and only if $i < j$ and $a_i$ and $a_j$ belongs
to the same segment (see Figure~\ref{fig:toy} for an example).

\begin{figure}[htb!]
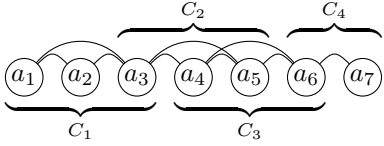

\centering
\begin{pgfpicture}{0cm}{0cm}{4.75cm}{2cm}
\newcommand{\nodesep}{0.75}
\newcommand{\nodewidth}{0.25cm}
\pgfnodecircletext{Node1}{\pgfxy(0,1)}{\nodewidth}{$a_1$}
\pgfnodecircletext{Node2}{\pgfrelative{\pgfxy(\nodesep,0)}{\pgfnodecenter{Node1}}}{\nodewidth}{$a_2$}
\pgfnodecircletext{Node3}{\pgfrelative{\pgfxy(\nodesep,0)}{\pgfnodecenter{Node2}}}{\nodewidth}{$a_3$}
\pgfnodecircletext{Node4}{\pgfrelative{\pgfxy(\nodesep,0)}{\pgfnodecenter{Node3}}}{\nodewidth}{$a_4$}
\pgfnodecircletext{Node5}{\pgfrelative{\pgfxy(\nodesep,0)}{\pgfnodecenter{Node4}}}{\nodewidth}{$a_5$}
\pgfnodecircletext{Node6}{\pgfrelative{\pgfxy(\nodesep,0)}{\pgfnodecenter{Node5}}}{\nodewidth}{$a_6$}
\pgfnodecircletext{Node7}{\pgfrelative{\pgfxy(\nodesep,0)}{\pgfnodecenter{Node6}}}{\nodewidth}{$a_7$}

\pgfsetendarrow{\pgfarrowtriangle{3pt}}
\pgfnodeconncurve{Node1}{Node2}{45}{135}{0.25cm}{0.25cm}

\pgfnodeconncurve{Node2}{Node3}{45}{135}{0.25cm}{0.25cm}
\pgfnodeconncurve{Node1}{Node3}{50}{130}{0.5cm}{0.5cm}

\pgfnodeconncurve{Node3}{Node4}{45}{135}{0.25cm}{0.25cm}

\pgfnodeconncurve{Node4}{Node5}{45}{135}{0.25cm}{0.25cm}
\pgfnodeconncurve{Node3}{Node5}{50}{130}{0.5cm}{0.5cm}

\pgfnodeconncurve{Node5}{Node6}{45}{135}{0.25cm}{0.25cm}
\pgfnodeconncurve{Node4}{Node6}{50}{130}{0.5cm}{0.5cm}

\pgfnodeconncurve{Node6}{Node7}{45}{135}{0.25cm}{0.25cm}

\pgfputat{\pgfrelative{\pgfxy(-0.25,-0.3)}{\pgfnodecenter{Node1}}}{\pgfbox[left,base]{$\underbrace{\hspace{2cm}}_{C_1}$}}
\pgfputat{\pgfrelative{\pgfxy(-0.25,0.5)}{\pgfnodecenter{Node3}}}{\pgfbox[left,base]{$\overbrace{\hspace{2cm}}^{C_2}$}}
\pgfputat{\pgfrelative{\pgfxy(-0.25,-0.3)}{\pgfnodecenter{Node4}}}{\pgfbox[left,base]{$\underbrace{\hspace{2cm}}_{C_3}$}}
\pgfputat{\pgfrelative{\pgfxy(-0.25,0.5)}{\pgfnodecenter{Node6}}}{\pgfbox[left,base]{$\overbrace{\hspace{1.25cm}}^{C_4}$}}

\end{pgfpicture}
\caption{A toy example of an order-sensitive model. The model can be viewed as a special type of Bayes Network.}
\label{fig:toy}
\end{figure}

If we interpret our model as a Bayesian network, then it is easy to see that
$p^*(a_i \mid \parents{a_i}) = q_D(a_i \mid \parents{a_i})$.  In other words,
the distribution $p^*$ is the maximum-likelihood estimate of the corresponding
Bayesian network.  In fact, the factorization of the BIC score in
Eq.~\ref{eq:scoremodeldef} is equivalent to the factorization obtained by
performing a junction tree decomposition (see~\cite{cowell99network}) to the
network. However, interpreting the model as a Bayesian network is somewhat
misleading, since our model does not change if we reverse the order of the
items.

\section{Finding the Optimal Model}
\label{sec:dynamic}
In this section we demonstrate how we can find the collection of item segments
having the minimal score $\score{\mathcal{C}}$. We do this by constructing a
dynamic program. Furthermore, we demonstrate how we can prune a large amount of
long segments, and thus reducing the required execution time. In addition, we
discuss how to efficiently compute the entropy using a particular tree structure.

\subsection{Forming the Dynamic Program}
Our goal is to find a collection $\mathcal{C}$ for which the score
$\score{\mathcal{C}}$ is the smallest possible. We achieve this by using
dynamic program and solving subproblems.
In order to do this we set $f\fpr{i, j}$ to be the best collection of segments
covering the attributes $\enset{a_i}{a_K}$ such that the first segment has the
attribute $a_j$ The optimal collection would be then $f\fpr{1, 1}$.

To compute the values of $f$ we use the following order: 
\[
\begin{split}
f(K, K), f(K - 1, K), f(K - 1, K - 1), \ldots\qquad\\
\qquad\ldots, f(1, K), \ldots, f(1, 1).
\end{split}
\]
To compute $f\fpr{i, j}$ we first note that either $f\fpr{i, j}$ has $X =
\iset{a_i}{a_j}$ as its first segment or $f\fpr{i, j} = f\fpr{i, j + 1}$.  Let
$\mathcal{H}$ be the optimal collection having $X$ as the first segment.  The
second segment of $\mathcal{H}$ must cover $j + 1$ and must start at $k = i +
1, \ldots, j + 1$.  Hence, we have
\[
	\mathcal{H} = \underset{i < k \leq j + 1}{\arg \min}\ \score{X \cup f\fpr{k, j + 1}}.
\]
Once we have discovered $\mathcal{H}$ we can set 
\[
f\fpr{i, j} = \arg \min \pr{\score{\mathcal{H}}, \score{f\fpr{i, j + 1}}}.
\]
The details of the algorithm are given in Algorithm~\ref{alg:dynamic}.

\begin{algorithm}[htb!]
	\For{$i = K, \ldots, 1$} {
		$f\fpr{i, K} \define \iset{a_i}{a_K}$\;
	}
	\For{$i = K - 1, \ldots, 1$} {
		\For{$j = K - 1, \ldots, i$} {
			$X \define \iset{a_i}{a_j}$\;
			$\mathcal{H} \define X \cup f\fpr{i + 1, j + 1}$\;
			\For{$k = i + 2, \ldots, j + 1$} {
				\If{$\score{\mathcal{H}} < \score{X \cup f\fpr{k, j + 1}}$} {
					$\mathcal{H} \define X \cup f\fpr{k, j + 1}$\;
				}
			}
			$f\fpr{i, j} \define \arg \min \pr{\score{\mathcal{H}}, \score{f\fpr{i, j + 1}}}$\;
		}
	}
	\Return $f\fpr{1, 1}$\;

\caption{Dynamic program solving the optimal collection of segments.}
\label{alg:dynamic}
\end{algorithm}

\subsection{Pruning Long Segments}
The running time for Algorithm~\ref{alg:dynamic} is $O\pr{K^3}$. In this
section we introduce a pruning condition and reduce the running time to $O\pr{K\min\pr{K, \log \abs{D}}^2}$.
This improvement is crucial since $\log \abs{D}$ is typically much smaller than
$K$.

We begin by asserting the necessary criteria for a segment occurring in the
optimal collection.
\begin{lemma}
Let $X$, $Y$ be segments of length $N$ having $N - 1$ mutual items. If
\begin{equation}
\label{eq:prune}
\score{X \cup Y} \geq \score{X} + \score{Y} - \score{X \cap Y},
\end{equation}
then there is a collection without $X \cup Y$ having an optimal score.
\end{lemma}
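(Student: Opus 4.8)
The plan is to argue by a local exchange (replacement) argument on an optimal collection. Let $\mathcal{C}$ be an optimal collection; if it does not contain $X \cup Y$ there is nothing to prove, so suppose it does. Since $X$ and $Y$ are segments of length $N$ sharing $N-1$ items, they must be adjacent, say $X = \iset{a_i}{a_{i+N-1}}$ and $Y = \iset{a_{i+1}}{a_{i+N}}$, so that $X \cup Y = \iset{a_i}{a_{i+N}}$ and $X \cap Y = \iset{a_{i+1}}{a_{i+N-1}}$. Write $C_m = X \cup Y$ for the offending segment, with neighbours $C_{m-1}, C_{m+1}$ and overlaps $S_{m-1} = C_{m-1}\cap C_m$, $S_m = C_m \cap C_{m+1}$. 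I would construct a new collection $\mathcal{C}'$ by deleting $C_m$ and inserting the two shorter segments $X$ and $Y$ in its place, covering exactly the same items, and then show $\score{\mathcal{C}'} \le \score{\mathcal{C}}$.

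The crucial point is a cancellation of the outer overlap terms. Because $C_{m-1}$ precedes $C_m$ in the order and cannot contain it (antichain), its last item is at most $a_{i+N-1}$, so $S_{m-1}$ is a prefix of $C_m$ missing $a_{i+N}$; hence $S_{m-1}\subseteq X$ and $C_{m-1}\cap X = S_{m-1}$, and symmetrically $C_{m+1}\cap Y = S_m$. Substituting these into the definition of the collection score (Eq.~\ref{eq:scoremodeldef}), the contributions $-\score{S_{m-1}}$ and $-\score{S_m}$ are identical in $\mathcal{C}$ and $\mathcal{C}'$ and cancel; the only surviving changes are the replacement of $\score{C_m}$ by $\score{X}+\score{Y}$ and the single new inner overlap term $-\score{X\cap Y}$. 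Therefore
\[
\score{\mathcal{C}'} - \score{\mathcal{C}} = \score{X} + \score{Y} - \score{X \cap Y} - \score{X \cup Y} \le 0,
\]
where the inequality is exactly the hypothesis Eq.~\ref{eq:prune}. Since $\mathcal{C}'$ omits $X\cup Y$ and has score no larger than the optimum, it is itself optimal, which settles the generic configuration.

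The step I expect to be the main obstacle is preserving the antichain-cover property of $\mathcal{C}'$ in the degenerate cases: when $S_{m-1}=X$ (the left neighbour already contains all of $X$) or $S_m=Y$, and in the boundary cases $m=1$ or $m=L$ where an outer neighbour is absent. If $X\subseteq C_{m-1}$ one cannot simply insert $X$ without creating a containment, so I would instead splice $C_{m-1}$ directly onto $Y$; the new overlap is then $C_{m-1}\cap Y = X\cap Y$, and an analogous rerouting handles $Y\subseteq C_{m+1}$ and the case where both occur simultaneously. The point I would emphasise is that in each such variant the bookkeeping collapses to the very same identity as above, because the outer overlap terms always cancel and the net change is again $\score{X}+\score{Y}-\score{X\cap Y}-\score{X\cup Y}$, so Eq.~\ref{eq:prune} still forces $\score{\mathcal{C}'}\le\score{\mathcal{C}}$. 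Verifying that each rerouted $\mathcal{C}'$ remains an antichain (no containment among the inserted segments and the retained neighbours, using $\abs{X\cap Y}=N-1<N$) and still covers $A$ completes the argument.
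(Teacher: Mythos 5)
Your proof is correct and takes essentially the same route as the paper's: a local exchange argument that replaces $X \cup Y$ in an optimal collection by $X$ and $Y$ (or by just one of them, or removes it entirely, in the degenerate cases where $X$ or $Y$ coincides with an overlap $S_i$ with a neighbouring segment), with the hypothesis in Eq.~\ref{eq:prune} showing that in every case the net change in score is $\score{X} + \score{Y} - \score{X \cap Y} - \score{X \cup Y} \le 0$. If anything, your write-up is more careful than the paper's, since you explicitly verify the cancellation of the outer overlap terms and the preservation of the antichain-cover property.
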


\begin{proof}
Let $\mathcal{C}$ be a collection with the optimal score. Assume that $X \cup Y
\in \mathcal{C}$.  Assume that $X$ and $Y$ are not included in some $S_i$, that
is, the only set in $\mathcal{C}$ that contains $X$ or $Y$ is $X \cup Y$.  We
can build an alternative collection by replacing $X \cup Y$ with separate $X$
and $Y$.  The impact on the score is that we replace the term $\score{X \cup
Y}$ with the terms $\score{X} + \score{Y} - \score{X \cap Y}$.  The assumption
now implies that this alternative model will have a better or an equal score.
Assume now that $X$ is one of the $S_i$ but $Y$ is not. Then if we replace $X
\cup Y$ with $Y$ we replace the terms $\score{X \cup Y} - \score{X}$ with the
terms $\score{Y} - \score{X \cap Y}$.  The assumption now implies that the
new has a better or an equal score.  The case is similar for $Y$.

If $X$ and $Y$ are both included in some $S_i$, then by simply removing $X \cup
Y$ we replace the terms $\score{X \cup Y} - \score{X} - \score{Y}$ with
$-\score{X \cap Y}$. This completes the proof.
\end{proof}

We can use the lemma to prune a large number of segments from dynamic program.
In fact, if the segment is long enough, then the lemma is automatically guaranteed.

\begin{proposition}
Let $X$, $Y$ be segments of length $N$ having $N - 1$ mutual items. If
\[
N \geq \log \abs{D} - \log \log \abs{D} + 2,
\]
then $\score{X \cup Y} \geq \score{X} + \score{Y} - \score{X \cap Y}$.
\label{prop:bound}
\end{proposition}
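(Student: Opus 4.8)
The plan is to expand the difference $\score{X \cup Y} - \score{X} - \score{Y} + \score{X \cap Y}$ using the definition $\score{C} = \abs{D}\ent{C; D} + \frac{\log \abs{D}}{2}\pr{2^{\abs{C}} - 1}$, and to treat the entropy terms and the BIC penalty terms separately. Writing $n = \abs{D}$, note first that $X \cup Y$ is a segment of $N + 1$ consecutive items, that $X$ and $Y$ both have length $N$, and that $S := X \cap Y$ has length $N - 1$. The penalty part then telescopes: the $-1$ constants cancel in the alternating sum, and the powers of two combine as $2^{N + 1} - 2 \cdot 2^{N} + 2^{N - 1} = 2^{N - 1}$, so the penalty contribution to the difference is exactly $\frac{\log n}{2} 2^{N - 1} = 2^{N - 2}\log n$.

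Next I would handle the entropy part. Let $x$ and $y$ be the unique attributes in $X \setminus Y$ and $Y \setminus X$, so that $X = \set{x} \cup S$, $Y = S \cup \set{y}$, and $X \cup Y = \set{x} \cup S \cup \set{y}$. Applying the chain rule for entropy (all w.r.t.\ $q_D$) and cancelling the repeated terms, the alternating sum of entropies collapses to
\[
\ent{X \cup Y; D} - \ent{X; D} - \ent{Y; D} + \ent{S; D} = -I(x ; y \mid S),
\]
where $I(x ; y \mid S)$ is the conditional mutual information between $x$ and $y$ given $S$. Hence the whole score difference equals $2^{N - 2}\log n - n\, I(x ; y \mid S)$.

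The key remaining step is to bound the mutual information. Because $x$ and $y$ are single binary attributes, $I(x ; y \mid S) \leq \ent{x \mid S; D} \leq \ent{x; D} \leq 1$, where the first inequality is the standard bound of mutual information by a conditional entropy, the second is that conditioning cannot increase entropy, and the last holds because $x$ is Bernoulli and all logarithms are base $2$. Substituting this bound, the score difference is at least $2^{N - 2}\log n - n$. Finally, the hypothesis $N \geq \log n - \log\log n + 2$ rearranges to $2^{N - 2} \geq n / \log n$, i.e.\ $2^{N - 2}\log n \geq n$, which makes the difference non-negative and establishes Eq.~\ref{eq:prune}.

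The only genuinely non-mechanical part is recognising that the alternating sum of entropies is precisely the (always non-negative) conditional mutual information $I(x ; y \mid S)$; once this is identified, the crude bound $I(x ; y \mid S) \leq 1$ for binary variables turns out to be exactly tight enough for the stated threshold, and the penalty computation is routine bookkeeping with powers of two.
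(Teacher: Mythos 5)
Your proof is correct and takes essentially the same route as the paper's: the BIC penalty terms telescope to $2^{N-2}\log\abs{D}$, the entropy combination is bounded by $1$ using the fact that the single attribute in $X - Y$ is Bernoulli, and the hypothesis is exactly the logarithm of the resulting sufficient condition $2^{N-2}\log\abs{D} \geq \abs{D}$. Your bound $I(x;y \mid S) \leq \ent{x \mid S} \leq 1$ is the same pair of inequalities the paper derives from subadditivity ($\ent{Y} \leq \ent{X \cup Y}$ and $\ent{X} - \ent{X \cap Y} \leq \ent{X - Y} \leq 1$), merely phrased in the language of conditional mutual information.
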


The idea behind the proof is that the BIC penalty for long segments is
too large when compared to the gain from obtained from the likelihood.

\begin{proof}
Let us write $V = X \cap Y$ and $W = X \cup Y$ and $Z = X - Y$.
By using the definition of the score function we can rewrite the inequality in
Eq.~\ref{eq:prune} as
\begin{equation}
\label{eq:bound}
\begin{split}
&2^{N - 2}\log \abs{D}  \\
&\qquad\geq \abs{D}\pr{\ent{X} + \ent{Y} - \ent{V} - \ent{W}}.
\end{split}
\end{equation}
To guarantee this inequality we will bound the right side from above.
Let $A$ and $B$ be two sets of items.  Basic
properties of the entropy state that $\ent{A} + \ent{B} \geq \ent{A \cup B} \geq \ent{A}$.
This immediately implies that $\ent{Y} - \ent{W} \leq 0$ and that
\[
\begin{split}
\ent{X} - \ent{V} & = \ent{V \cup Z} - \ent{V}  \\
& \leq \ent{V} + \ent{Z} - \ent{V} = \ent{Z} \leq 1.
\end{split}
\]
The last inequality is true since, by definition, $Z = X - Y$ contains only one
item and the entropy of a single Bernoulli variable is $1$, at maximum.  This
implies that the right side of Eq.~\ref{eq:bound} is bounded by $\abs{D}$.
Hence we have the sufficient condition
\[
2^{N - 2}\log \abs{D} \geq \abs{D}.
\]
By taking the logarithm we obtain the assessment of the proposition.
\end{proof}

The proposition tells us that we can safely ignore any segments of length $\log
\abs{D} - \log \log \abs{D} + 3 \in O\pr{\log \abs{D}}$, or longer. Thus, by
modifying Line~2 in Algorithm~\ref{alg:dynamic} we can ignore computing $f(i,
j)$ if $j - i$ is large enough.  This speeds up the execution time of
Algorithm~\ref{alg:dynamic} to $O\pr{K\min \pr{K, \log \abs{D}}^2}$ which can
be very effective for datasets with many attributes but small number of
transactions. 

\subsection{Computing Entropy Efficiently}

In our experiments the bottleneck is the entropy calculation. Consequently, it
is to optimize the computation to be as fast as possible. In this section we
will show that in our case, computing entropy for a single segment can be in
essentially $O(\abs{D})$ time.

Assume that we want to compute entropy $\ent{C ; D}$ for a given item segment
$C$.  To compute this we first partition the transaction into groups
$\enset{T_1}{T_L}$, such that transactions $t$ and $u$ belong to the same group $T_i$
if and only if $t_C = u_C$. Then it follows directly from the definition that
\[
	\ent{C ; D} = -\sum_{i = 1}^L \frac{\abs{T_i}}{\abs{D}} \log \frac{\abs{T_i}}{\abs{D}}.
\]

Constructing the partition for a single item segment from a scratch can be done in
$O(\abs{C}\abs{D})$ time by a radix sort. We can, however, speed up the total
execution time by computing the entropies of several segments simultaneously.
More precisely, if we are given indices $s$ and $e$ such that $s < e$, then
Algorithm~\ref{alg:entropy} will output entropies for segments $a_s\cdots a_j$, where $s \leq j \leq e$.

\begin{algorithm}[htb!]
	$T_1 \define D$\;
	\For{$j = s, \ldots, e$} {
		\ForEach{$T_i$ in the partition} {
			$U \define \emptyset$\;
			\ForEach{$t \in T_i$, $t_j = 1$} {
				Remove $t$ from $T_i$ and add to $U$\;
			}
			\lIf{$T_i = \emptyset$} {Remove $T_i$ from the partition}
			\lIf{$U \neq \emptyset$} {Add $U$ to the partition}
		}
		$\ent{a_s\cdots a_j} \define -\sum_{i = 1}^L \frac{\abs{T_i}}{\abs{D}} \log \frac{\abs{T_i}}{\abs{D}}$\;
	}
\caption{Algorithm for computing entropies $\ent{a_s}$, $\ent{a_sa_{s + 1}}$, \ldots,  $\ent{a_s\cdots a_e}$.}
\label{alg:entropy}
\end{algorithm}

The execution time of Algorithm~\ref{alg:entropy} is $O((e - s + 1)\abs{D})$
but it will compute entropies for $e - s + 1$ segments simultaneously. We use
this algorithm to cache all the needed entropies before we invoke the dynamic
program in Algorithm~\ref{alg:dynamic}. We need only $K$ calls of
Algorithm~\ref{alg:entropy}, one call for each $s = 1, \ldots, K$.

\section{Inducing the Order}
\label{sec:order}
So far we have assumed that we are given an order and we have focused on
measuring the quality of that order.  In this section, we will consider
different techniques for inducing the order from the data.

\subsection{Fiedler Vector Approach}
\label{sec:fiedler}

Assume for the moment that we are interested in a model that have segments only
of size 2. We are interested in finding the order that produces the best
model. We define $C$, the \emph{mutual information matrix} of size
$K \times K$ to be
\[
C_{ij} = \ent{a_i} + \ent{a_j} - \ent{a_ia_j}.
\]

Discovering the best order reduces to Traveling Salesman Problem which is a
computationally infeasible problem~\cite{papadimitriou94complexity}.

We will use a popular technique in which the order is constructed from a Fiedler
vector~\cite{fiedler75fiedler}. To be more precise, let the \emph{Laplacian} of
$C$ be $L = \diag{C} - C$, where $\diag{C}$ is a diagonal matrix containing the
sums of the rows of $C$.  The Fiedler vector $f(C)$ is the eigenvector of $L$
of the second smallest eigenvalue.  The order induced by this vector is simply
the order of indices of the sorted entries of the vector. If we were to
permute the attributes, then entries in the Fiedler vector are shuffled with
the exactly same permutation. Thus, the fiedler 
order is not affected by the original order of the attributes.
We will justify our choice by showing that the Fiedler vector does return the
best order in some cases. To be more specific, assume for a moment that
the attribute $a_i$ depends only on its immediate neighbor $a_{i - 1}$. In
other words, we assume that the data is generated from a model constructed from
the segments $\set{a_{i - 1}a_i \mid i = 2, \ldots, K}$. If that is the case,
then the mutual information matrix $C$ has a special property: the entries of $C$
are decreasing as we move from the diagonal towards the corners. That is, 
\begin{equation}
\max \pr {C_{i(j + 1)}, C_{(i - 1)j}} < C_{ij}, \text{ for } i < j,
\label{eq:monotone}
\end{equation}
and similarly for the lower triangular part of $C$. Such a matrix is called
$R$-matrix~\cite{atkins99seriation}. The following theorem states that for
$R$-matrices, the Fiedler vector finds the correct order.

\begin{theorem}[Theorem 3.3 in~\cite{atkins99seriation}]
\label{thr:monotone}
Let $C$ be such that the property in Eq.~\ref{eq:monotone} holds. Then the Fiedler vector $f(C)$
will have $f_i > f_j$ whenever $i < j$.
\end{theorem}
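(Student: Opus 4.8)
The plan is to argue from the variational characterization of the Fiedler vector rather than from the eigenvalue equation directly. First I would record that, since $C$ is symmetric with nonnegative entries, $L = \diag{C} - C$ is a positive semidefinite graph Laplacian whose kernel is spanned by the all-ones vector; provided the near-diagonal entries $C_{i(i+1)}$ are positive, the underlying weighted graph is connected, so the second-smallest eigenvalue $\lambda_2$ is simple and positive and the Fiedler vector $v = f(C)$ is, up to sign and scale, the unique minimizer of the Rayleigh quotient over $\mathbf{1}^\perp$. Throughout I would use the identity $v^\mathsf{T} L v = \sum_{i<j} C_{ij}(v_i - v_j)^2$ and the fact that $v$ has zero coordinate sum.

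The heart of the proof is a \emph{rearrangement principle}. Fix the multiset of values $\{v_1,\dots,v_K\}$ and, for a permutation $\sigma$, set $g(\sigma) = \sum_{i<j} C_{ij}(v_{\sigma(i)} - v_{\sigma(j)})^2$. Any $\sigma$ yields a vector of the same norm and the same zero sum, hence the same Rayleigh-quotient denominator, so if some $\sigma$ made $g(\sigma) < g(\mathrm{id}) = v^\mathsf{T} L v$ we would contradict the minimality of $v$. I would therefore aim to show that, for any $C$ satisfying Eq.~\ref{eq:monotone}, $g$ is minimized exactly by the permutation that sorts the values monotonically, and strictly so when the values are distinct. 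Granting this, the coordinates of $v$ must already be monotone, which after fixing the sign of $v$ is precisely $f_i > f_j$ for $i < j$; distinctness of the $f_i$ comes from the strict inequalities in the hypothesis.

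To prove the rearrangement principle I would try an exchange argument: swapping the values $a, b$ assigned to positions $i$ and $j$ alters $g$ by $(b-a)\sum_{k \neq i,j}(C_{ik} - C_{jk})(a + b - 2 v_{\sigma(k)})$, and for an adjacent swap the factor $C_{ik} - C_{jk}$ has a fixed sign on each side of the pair by Eq.~\ref{eq:monotone}. \textbf{This is where I expect the real difficulty to lie:} the remaining factor $a + b - 2 v_{\sigma(k)}$ has no matching sign for a generic arrangement, so a single adjacent transposition need not lower $g$ and a naive bubble-sort descent stalls. Resolving this needs the full supermodular (Monge/anti-Robinson) structure of $C$ rather than just its sign pattern — either by organizing the exchanges so the cross-terms telescope, or by invoking the known optimality of the sorted assignment for quadratic assignment against a Robinson cost matrix.

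As a fallback I would consider the oscillation-matrix route: passing to $cI - L$ makes $v$ the eigenvector of the second-largest eigenvalue of a nonnegative matrix, and the Gantmacher--Krein theory then forces $v$ to have exactly one sign change. Upgrading a single sign change to full monotonicity, however, once more requires the graded decay of $C$ away from the diagonal, and the obstacle merely migrates to checking that the transformed matrix is genuinely oscillatory. It is worth noting that the purely combinatorial tool, Fiedler's nodal-domain theorem, is vacuous here because the mutual-information graph is complete; the quantitative decay expressed by Eq.~\ref{eq:monotone}, not the support of $C$, is what drives the result.
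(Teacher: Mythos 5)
Two things up front. First, the paper never proves Theorem~\ref{thr:monotone} at all: it is imported verbatim from Atkins et al.~\cite{atkins99seriation}, so there is no internal proof to compare your attempt against, and your proposal has to be judged on its own correctness. Second, judged that way, your main route fails: the rearrangement principle it rests on --- that for every $C$ satisfying Eq.~\ref{eq:monotone} and every multiset of values, $g(\sigma)=\sum_{i<j}C_{ij}(v_{\sigma(i)}-v_{\sigma(j)})^2$ is minimized by the sorting permutation --- is not merely hard to prove, it is false. Take $K=4$ with off-diagonal entries $C_{12}=C_{34}=3$, $C_{13}=C_{24}=2$, $C_{14}=1$, $C_{23}=100$; every strict inequality demanded by Eq.~\ref{eq:monotone} holds. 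For the zero-sum vector $v=(-1,1,1,-1)$ and its monotone rearrangement $w=(1,1,-1,-1)$,
\[
v^T L v = 4\,(3+2+2+3) = 40,
\qquad
w^T L w = 4\,(2+1+100+2) = 420,
\]
so sorting the values increases the Rayleigh quotient tenfold, and by continuity the same happens for nearby vectors with distinct entries. Note that this matrix does not contradict the theorem: $v$ happens to be an eigenvector of $L$ with eigenvalue $10$, but the Fiedler value is $\approx 6.99$ and its eigenvector $\approx(1,\,0.005,\,-0.005,\,-1)$ is strictly monotone. The example kills only your lemma. (Your observation that the minimizer's own arrangement is optimal among rearrangements of its values is correct and comes for free; what you need, and what fails, is that the \emph{sorted} arrangement is also optimal.)

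This also shows that the step you flagged as ``where I expect the real difficulty to lie'' is not a difficulty but an impossibility, so neither of your proposed rescues can exist. Eq.~\ref{eq:monotone} constrains decay within each row and column but places no constraint on relative magnitudes across rows; in particular it does not imply the Monge/supermodular coupling $C_{ij}+C_{(i+1)(j+1)}\geq C_{i(j+1)}+C_{(i+1)j}$ that an exchange-telescoping or quadratic-assignment argument would require (above, $C_{13}+C_{24}=4$ while $C_{14}+C_{23}=101$). There is no ``known optimality of the sorted assignment for quadratic assignment against a Robinson cost matrix'' to invoke, and restricting the rearrangement claim to the values actually taken by the Fiedler vector is circular, since nothing is known about those values before the theorem is proved. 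Two smaller slips: connectivity does \emph{not} make $\lambda_2$ simple (the complete graph is the standard counterexample), and indeed the paper works with PQ-trees precisely because Fiedler degeneracy occurs; and your Gantmacher--Krein fallback is, as you yourself concede, not a proof. The actual argument in~\cite{atkins99seriation} works with the eigenvalue equation and the structure of the Fiedler vector itself rather than with rearrangements of its entries --- and the example above shows that any correct proof must do so, because the value multiset alone cannot carry the argument.
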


Motivated by this result we consider in this paper 4 different approaches for
computing the order.
\begin{enumerate}
\item $\textsc{mi} = f(C)$ uses the order obtained from the Fiedler vector of the
mutual information matrix.
\item $\textsc{m2} = f(C')$, where $C'_{ij} = C_{ij}$ except when $C_{ij} \leq
\log\abs{D} / 2\abs{D}$ in which case $C'_{ij} = 0$. The motivation behind this
approach is as follows. The mutual information $C_{ij}$ can be viewed as a
difference of the log-likelihoods.  The first model is the independence model
$\mathcal{M}_1$ and has the log-likelihood $-\ent{a_i} - \ent{a_j}$.  The
second model $\mathcal{M}_2$ is the full contingency table model and has the
log-likelihood $-\ent{a_ia_j}$.  Here the idea is that instead of always
comparing $\mathcal{M}_2$ against $\mathcal{M}_1$, we first select the one
model that is more probable. If we select $\mathcal{M}_2$, then the difference
is the mutual information. If, on the other hand, we select $\mathcal{M}_1$,
then the difference will be $0$. If we use BIC score as a criteria for
selecting the model, then $\mathcal{M}_1$ will have a better score if and only
if $C_{ij} \leq \log\abs{D} / 2\abs{D}$. In other words, if $C_{ij}$ is too small compared
to the BIC penalty, then we treat $a_i$ and $a_j$ as independent, and set the mutual information to be $0$.

\item $\textsc{co} = f(D^TD)$, that is, the Fiedler vector of the co-occurrence
matrix.  Such orders have been used for minimizing the Lazarus effects, that
is, $0$s occurring between $1$s~\cite{atkins99seriation}.
\item $\textsc{cs} = f(VD^TDV)$, where $V$ is a diagonal matrix, such that,
$V_{ii} = (D^TD)_{ii}^{-1/2}$,  that is, \textsc{cs} is the order obtained from
the cosine similarity matrix.
\end{enumerate}

For calculating the Fiedler order we use the algorithm given
in~\cite{atkins99seriation}. We should point out that the Fiedler order is
unique if there is only one Fiedler vector (up to normalization). However, it
is often the case that there are several vectors and hence several orders. In
that case the Atkins' algorithm returns a set of all possible orders
represented by a PQ-tree. This set of orders may be large (it can contain all
possible orders) so in practice we will sample orders from this set in our
experiments. Luckily, sampling orders from a set represented by a PQ-tree is
trivial.

\subsection{Greedy Local Search}
\label{sec:greedy}
In addition to the aforementioned spectral methods we will consider a simple
greedy descent approach. Assume that we are given an order $o$. For each $i =
2, \ldots, K$, we consider orders which are obtained from $o$ by
swapping $o(i)$ and $o(i - 1)$. Among such orders we select the one
that has the lowest score, say $b$. If the score $\score{b}$ is lower than the
original $\score{o}$, then we replace $o$ with $b$ and repeat the step,
otherwise we stop the search and output $o$. The pseudo-code is given in Algorithm~\ref{alg:greedy}.

\begin{algorithm}[htb!]
\While {changes} {
	$b \define o$\;
	\ForEach{$i = 2, \ldots, K$} {
		$u \define o$\;
		Swap $u(i - 1)$ and $u(i)$\;
		\If{$\score{u} < \score{b}$} {
			$b \define u$\;
		}
	}
	$o \define b$\;
}
\Return $o$\;
\caption{\textsc{GreedyOrder}, A simple hill-climbining algorithm for improving the order $o$.}
\label{alg:greedy}
\end{algorithm}

\section{Comparing to Random Orders}
\label{sec:permutation}
The score of a model alone is not sufficient alone and it needs to be compared
against some baseline. In this section we will consider two different
approaches for the post-normalization.

Let $o$ be the order of the attributes. Let $\sfam{I}$ be the collection of
item segments corresponding to the independence model, $\sfam{I} =
\enset{\set{a_1}}{\set{a_K}}$.
Our first attempt is to compare the scores $\score{o}$ and $\score{\sfam{I}}$,
that is, how good the score is against the independence model. This approach,
however, has a drawback. Consider a dataset with two clusters such that the
probability of having $1$ is high in the first cluster and low in the second
cluster. Assume also that the inside a cluster the attributes are independent
and have the same probability of having $1$.  Such a dataset has a curious
property. The best score for any order is lower than the score for the
independence model.
However, since data is symmetric,
the best scores for all orders should be equal.

The discussion above suggests a more refined approach, that is, we should
compare the score of the given order against the scores of random orders.

Instead of defining a measure just for a single order we define a measure
for a \emph{set} of orders. The reason for this is that Atkins' algorithm
(see Section~\ref{sec:fiedler}) may not return a single order but a
set of orders represented as a PQ-tree.

Let us assume that we are given a set of orders $O$. Let $o$ be a random order
from $O$ selected uniformly. Also let $u$ be a random order from the set of
all possible orders $U$.  We define the measure to be the the probability that
$\score{o}$ is higher than $\score{u}$, that is, 
\[
\pemp{O} = \prob{\score{o} > \score{u}} + \prob{\score{o} = \score{u}} / 2.
\]

If the set $O$ consists of orders with exceptionally low scores then the $\pemp{o}$
will be close to $0$.  If the orders have the same score as random orders, then
$\pemp{O}$ will be close to $1/2$. If we are given a single order $o$, we write
$\pemp{o}$ for $\pemp{\set{o}}$.

In practice, we estimate the measure by sampling orders $o$ and $u$.  A problem
with the measure $\pemp{o}$ is that in the case when it is close to $0$, the
number of samples should be really large in order to get an estimate different
from $0$. Hence, we define a second, smoother, measure based on estimation with
normal distributions.

In order to do that, let $\mu_1 = \mean{}{\score{o}}$ and $\mu_2 =
\mean{}{\score{u}}$ be the means of the scores and let $\sigma_1^2 =
\mean{}{\score{o}^2} - \mu_1^2$ and $\sigma_2^2 = \mean{}{\score{u}^2} -
\mu_2^2$ be the variances. Let $X_i$ be a random normal variable distributed as
$N(\mu_i, \sigma_i)$, for $i = 1, 2$. We define
\[
\pratio{O} = -\log \prob{X_1 > X_2} = -\log \Phi\fpr{\frac{\mu_1 - \mu_2}{\sqrt{\sigma_1^2 + \sigma_2^2}}},
\]
where $\Phi$ is the cumulative density function for the standard normal
distribution $N(0, 1)$. In practice we estimate the means and the variances by
sampling the orders from the sets $O$ and $U$. We should make clear that
$\pratio{O}$ should not be treated as an estimate for $-\log \pemp{O}$. The
distribution of the scores for random scores is not normal even if the number
of data points increases to infinity. Nevertheless, the measure $\pratio{O}$
has desired properties. It is large if the scores of orders in $O$ are
significantly better.  If the scores are as good as random, then
$\pratio{O}$ is close to $-\log 1/2 = 1$.

The value $\pemp{o}$ gives us means to test whether the order $o$ is
significantly better than the random order. However, if the order is induced
from the dataset, for example, using the spectral methods,
we may overfit the data. To illustrate the problem
consider the cluster data discussed above. In this data no order should be
significant.  However, since dataset is finite there are small variations in
the scores.  Now consider the algorithm that finds the order with the best
score. The measure $\pemp{o}$ of such order will always be $0$. We remedy this
problem with cross validation, that is, we split the data into two random
datasets. We will learn the order from the first dataset and compute the measure
from the second.

\section{Asymptotic Analysis}
\label{sec:asymptotic}
Assume that we know the distribution $p$ from which data is generated. What is then
the appropriate definition for a score of a linear order? In this section we will
define such a score, which we denote $\df{\opt{o, p}}$, and show that $\score{o}$ is directly connected to this
score as the number of transactions goes to infinity.

Let $p$ be a distribution from which data is generated. Assume that we are given an order
$o$ and let 
\[
\opt{o, p} = \arg \min_{\sfam{C}} \set{\df{\sfam{C}}; \sfam{C} \in \seg{o}, p \in \model{\sfam{C}}},
\]
that is, $\opt{o, p} \in \seg{o}$ is the set of item segments with the smallest
number of freedom such that $p \in \model{\opt{o, p}}$. Such a set exists, since
$p \in \model{\set{A}}$.

\begin{example}
We continue Example~\ref{ex:toy} given in Section~\ref{sec:intro}. The optimal
set of segments for the order $a_1\cdots a_5$ is $\sfam{O}_1 = \set{a_1a_2,
a_2a_3, a_3a_4, a_4a_5}$. On the other hand, the optimal set of segments for the
order $a_1a_4a_3a_5a_2$ is $\sfam{O}_2 = \set{A}$ since there is no other way to
include $a_1$ and $a_2$ into the same segment. The degrees are $\df{\sfam{O}_1} = 9$
and $\df{\sfam{O}_2} = 31$. 
\end{example}

The preceding example demonstrates that if the degree of $\opt{o, p}$ is low,
then the order is good. This suggests that the orders in which the dependent
attributes are close will have a low degree of freedom, hence there should be a
connection between $\df{\opt{o, p}}$ and $\score{o}$.  We will now state the
main result of this section that states essentially that asymptotically
$\score{o}$ is an increasing function of $\df{\opt{o, p}}$. The proof of the
theorem is given in Appendix.

\begin{theorem}
\label{thr:asymptotic}
Let $p$ be a distribution from which $D$ is generated such that $p(A = t) > 0$ to any $t$. Let $o_1$ and $o_2$ be
two orders such that $\df{\opt{o_1, p}} < \df{\opt{o_2, p}}$. Then the
probability of $\score{o_1} < \score{o_2}$ converges to $1$ as $\abs{D}$
approaches infinity.
\end{theorem}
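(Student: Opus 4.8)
The plan is to reduce everything to the per-sample decomposition of the score and then bound the two minima separately, exploiting that the likelihood term is linear in $\abs{D}$ while the BIC penalty grows only like $\log\abs{D}$. Writing $g(\mathcal{C}; q) = \sum_{i=1}^{L}\ent{C_i; q} - \sum_{i=1}^{L-1}\ent{S_i; q}$ for the per-sample cross-entropy of the optimal model in $\model{\mathcal{C}}$, Eqs.~\ref{eq:likelihood}--\ref{eq:scoremodeldef} give $\score{\mathcal{C}} = \abs{D}\,g(\mathcal{C}; q_D) + \frac{\log\abs{D}}{2}\df{\mathcal{C}}$. The one structural fact I would record first is that $g(\mathcal{C}; q) = -\sum_t q(t)\log p^*_{q}(t) \geq \ent{q}$, with equality exactly when $q \in \model{\mathcal{C}}$, since $g(\mathcal{C}; q) - \ent{q}$ is a Kullback--Leibler divergence and hence nonnegative. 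Because $\seg{o_2}$ is a finite set, it suffices to show that, with probability tending to $1$, \emph{every} $\mathcal{C}\in\seg{o_2}$ satisfies $\score{\mathcal{C}} > \score{\opt{o_1,p}} \geq \score{o_1}$, after which a union bound finishes the argument.

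For the upper bound I would take $\mathcal{C}_1 = \opt{o_1,p}$, so that $p\in\model{\mathcal{C}_1}$, $g(\mathcal{C}_1; p) = \ent{p}$ and $\df{\mathcal{C}_1} = \df{\opt{o_1,p}} =: d_1$. Since each empirical entropy converges almost surely to its population value, $g(\mathcal{C}_1; q_D)\to \ent{p}$, giving $\score{o_1} \le \abs{D}\,g(\mathcal{C}_1; q_D) + \tfrac{\log\abs{D}}{2}\,d_1$. I would then split $\seg{o_2}$ into the \emph{good} collections, those with $p\in\model{\mathcal{C}}$, and the \emph{bad} ones, those with $p\notin\model{\mathcal{C}}$. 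For a bad $\mathcal{C}$ the fact above gives a strict gap $\delta_{\mathcal{C}} = g(\mathcal{C}; p) - \ent{p} > 0$; taking $\delta_0 = \min\delta_{\mathcal{C}}>0$ over the finitely many bad collections, the law of large numbers yields $\score{\mathcal{C}} \ge \abs{D}\,g(\mathcal{C}; q_D) = \abs{D}(\ent{p}+\delta_{\mathcal{C}}+o(1))$, which eventually exceeds $\score{o_1} = \abs{D}\ent{p}+o(\abs{D})$ because the linear gap $\abs{D}\delta_0$ dominates. For a good $\mathcal{C}$ the definition of $\opt{o_2,p}$ forces $\df{\mathcal{C}} \ge \df{\opt{o_2,p}} =: d_2 > d_1$, so I would write $\score{\mathcal{C}} - \score{o_1} \ge \abs{D}\big(g(\mathcal{C}; q_D) - g(\mathcal{C}_1; q_D)\big) + \frac{\log\abs{D}}{2}(d_2 - d_1)$ and argue that the penalty difference, of order $\log\abs{D}$, swamps the likelihood difference.

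The main obstacle is precisely this last point: I must show that for two good collections the likelihood difference $\abs{D}\big(g(\mathcal{C}; q_D) - g(\mathcal{C}_1; q_D)\big)$ is only $O_p(1)$, i.e. bounded in probability, and in particular grows more slowly than $\log\abs{D}$. Here the naive central-limit bound $O_p(\sqrt{\abs{D}})$ is far too weak, so I would not compare the two collections directly. Instead I set $G(\mathcal{C}; q) = g(\mathcal{C}; q) - \ent{q} \ge 0$ and note that for a good collection $G$ attains its global minimum value $0$ at $q = p$; hence the gradient of the smooth map $q\mapsto G(\mathcal{C}; q)$ vanishes at $p$, and a second-order Taylor expansion gives $G(\mathcal{C}; q_D) = O(\norm{q_D - p}^2)$. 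Combined with $\norm{q_D - p} = O_p(\abs{D}^{-1/2})$ this yields $\abs{D}\,G(\mathcal{C}; q_D) = O_p(1)$, and since $\abs{D}\,G(\mathcal{C}; q_D) = \abs{D}(g(\mathcal{C}; q_D) - \ent{q_D})$ the common term $\ent{q_D}$ cancels when I subtract the analogous expression for $\mathcal{C}_1$. This is where the hypothesis $p(A=t)>0$ for all $t$ is essential: it keeps all entropies and the projection $p^*_q$ smooth in a neighborhood of $p$, so the Taylor expansion and the vanishing-gradient step are legitimate. With the $O_p(1)$ bound in hand, each good $\mathcal{C}$ satisfies $\score{\mathcal{C}} - \score{o_1} \ge \frac{\log\abs{D}}{2}(d_2-d_1) - O_p(1) \to \infty$, and the union bound over the finite family $\seg{o_2}$ completes the proof.
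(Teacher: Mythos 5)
Your proof is correct, and while its skeleton mirrors the paper's---split $\seg{o_2}$ according to whether $p \in \model{\sfam{C}}$, kill the inconsistent collections with the law of large numbers, let the penalty gap $\frac{\log\abs{D}}{2}\pr{\df{\sfam{C}} - \df{\opt{o_1, p}}}$ dominate an $O_p(1)$ likelihood difference for the consistent ones, and finish with a union bound over the finite family---the key technical step is handled by a genuinely different argument. The paper obtains the $O_p(1)$ bound (its Corollary~\ref{cor:limit}) by developing an exponential-family representation of $\model{\sfam{C}}$ (Lemmas~\ref{lem:basis} and~\ref{lem:exponential}), proving the refinement/uniqueness property of $\opt{o, p}$ (Theorem~\ref{thr:optseg}) so that the relevant models are nested, and then citing classical chi-square asymptotics for likelihood-ratio statistics in Lemma~\ref{lem:bound}; it also proves consistency of the empirically selected model (Theorem~\ref{thr:converge}) as an intermediate step. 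You replace all of this with a self-contained second-order Taylor expansion of $G(\sfam{C}; q) = \kl{q}{q\fpr{A; \sfam{C}}}$ around its minimum at $q = p$, combined with the multinomial bound $\norm{q_D - p} = O_p\fpr{\abs{D}^{-1/2}}$; in fact you only need the Taylor bound for $\sfam{C}_1 = \opt{o_1, p}$, since $G \geq 0$ takes care of the other side. Your route is more elementary---no exponential-family machinery, no uniqueness or refinement structure for $\opt{o, p}$, no nestedness, no external limit theorems---and it yields the slightly stronger conclusion that the score gap diverges in probability. What the paper's heavier route buys is finer information: the actual limiting (difference-of-chi-square) distribution of the likelihood gap and the model-selection consistency of Theorem~\ref{thr:converge}, both of independent interest. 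One small point to tighten in your write-up: the ``vanishing gradient'' at $p$ should be understood for the restriction of $G$ to the affine hull of the probability simplex, which is all you need since $q_D$ lives there and $p$ is relatively interior by the hypothesis $p(A = t) > 0$; the ambient gradient need not vanish.
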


Note that $\pratio{o}$ and $\pemp{o}$ are both increasing functions of
$\score{o}$.  Hence, the theorem states that eventually both measures are
increasing functions of $\df{\opt{o, p}}$. Such a behavior is reasonable yet
surprising since the BIC penalty vanishes from $\score{o}$ as the amount of
data increases. The reason for such behavior is that the difference between the
BIC penalty terms will outweight the difference between the likelihoods. 

\section{Experiments}
\label{sec:experiments}

In this section we will describe our empirical results\footnote{Implementation is available at \url{http://adrem.ua.ac.be/implementations}}. We begin by showing
with synthetic datasets that our measures do give the expected results. Measures
$\pemp{o}$ and $\pratio{o}$ are high for datasets in which there are no
particular order structure. On the other hand, measures are small for datasets in
which there is a clear order structure. We continue by studying the asymptotic
behavior of the score as described in Section~\ref{sec:asymptotic}.

We also tested the spectral methods with real-world data\-sets and show that all these
datasets do have an order structure. Finally, we study how well the greedy
method improves the scores of the spectral orders.

Since Atkins' algorithm for discovering spectral orders returns PQ-tree which
may correspond to several spectral orders, we sampled $1000$ random orders from
PQ-tree. However, if the number of orders represented by PQ-tree was less than
1000 we computed all the orders. 

\subsection{Synthetic datasets}
For testing purposes we considered $4$ different generated datasets. Each
dataset had $20$ attributes and $2000$ of rows. Each dataset was split into two
subsets, each having $1000$ rows. The first part of the data was used for
finding the spectral orders and the second part for actually computing the
score.  The measures $\pemp{o}$ and $\pratio{o}$ were computed by comparing them
with $1000$ random orders.

Our first dataset, \emph{Ind}, contains independent attributes. The second data,
\emph{Clust}, contains two clusters, each of $500$ rows. The attributes within
the cluster are independent. The probability of 1 was set to $3/4$ in the
first cluster and $1/4$ in the second cluster. Our third dataset, \emph{Path},
was generated such that attribute $a_i$ was generated from $a_{i-1}$ by adding
$1/4$ amount of noise, that is, $\prob{a_i = 0; a_{i - 1} = 1} = \prob{a_i = 1;
a_{i - 1} = 0} = 1/4$. The first attribute $a_1$ was generated by a fair coin
flip. Our last dataset is similar, \emph{Npath}, to \emph{Path} except that
$3/4$ were used as the amount of noise. Note that the attributes in \emph{Path}
are positively correlated whereas the consecutive attributes in \emph{Npath}
are negatively correlated.  Our expectation is that \emph{Ind} and \emph{Clust}
have no extraordinary order whereas in \emph{Path} and \emph{Npath} the
generating order is the most natural one.

\begin{table*}[htb!]
\centering
\begin{tabular}{lrrrr l@{}rrrr}
\toprule
& \multicolumn{4}{c}{$\pemp{o}$} & & \multicolumn{4}{c}{$\pratio{o}$} \\
\cmidrule{2-5}
\cmidrule{7-10}
\emph{Data} & \textsc{CO} & \textsc{CS} & \textsc{MI} & \textsc{M2} && \textsc{CO} & \textsc{CS} & \textsc{MI} & \textsc{M2}  \\
\midrule
\emph{Ind} & $0.6$ & $0.6$ & $0.6$ & $0.5$   &  & $0.6$ & $0.6$ & $0.6$ & $1.0$ \\
\emph{Clust} & $0.9$ & $0.9$ & $0.7$ & $0.7$ &  & $0.1$ & $0.2$ & $0.6$ & $0.6$ \\
\emph{Path} & $0$ & $0$ & $0$ & $0$              &  & $36.1$ & $41.8$ & $41.8$ & $41.8$ \\
\emph{Npath} & $0.98$ & $0.9$ & $0$ & $0$       &  & $0.03$ & $0.2$ & $44.4$ & $44.4$ \\

\bottomrule
\end{tabular}
\caption{Measures $\pemp{o}$ and $\pratio{o}$ of the spectral orders obtained from
the synthetic datasets. The orders are explained in Section~\ref{sec:fiedler}.}
\label{tab:syntrank}
\end{table*}

From the results given in Table~\ref{tab:syntrank} we see that we get the
expected results. The datasets \emph{Ind} and \emph{Clust} both have high measure values
since these datasets have no extraordinary order. In dataset \emph{Path} all
the orders have $\pemp{o} = 0$ suggesting that there is a strong order structure.
The orders given by the spectral algorithms for \emph{Path} are all
close to the original order. For \emph{Npath} we see
that the methods \textsc{CO} and \textsc{CS} fail to find a significant order.
The reason for this is that \emph{Npath} contains negative correlations. On
the other hand, both \textsc{MI} and \textsc{M2} find a significant order and
produce significantly small measure values.

\subsection{Asymptotic Behavior}

Our next focus is to study the asymptotic behavior of the score $\score{o}$.
Theorem~\ref{thr:asymptotic} implies that asymptotically $\pemp{o}$ is an
increasing function of $\df{\opt{o, p}}$, the number of free parameters of the
model containing the generative distribution $p$. To illustrate behavior we
generated $4$ datasets using the same method we used to generate dataset
\emph{Path}.  The datasets contained $10$ attributes and varying number of
transactions.  We computed $1000$ random orders for which we computed $\pemp{o}$
Since we know the generative distribution we were able to compute
$\df{\opt{o, p}}$ directly.

\begin{figure*}[htb!]
\centering
\subfigure[$\abs{D} = 10^3$, $\rho = 0.29$]{\includegraphics[width=4.3cm]{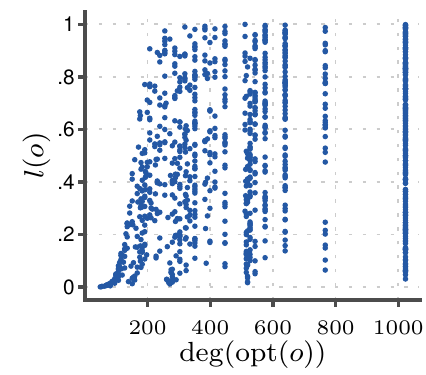}}%
\subfigure[$\abs{D} = 10^4$, $\rho = 0.77$]{\includegraphics[width=4.3cm]{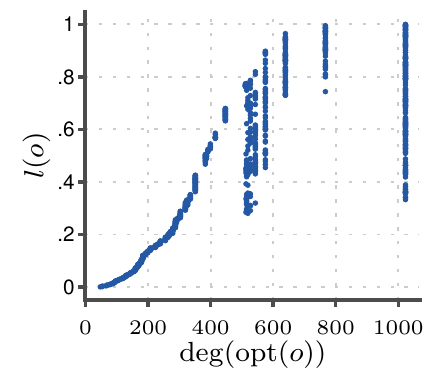}}%
\subfigure[$\abs{D} = 10^5$, $\rho = 0.97$]{\includegraphics[width=4.3cm]{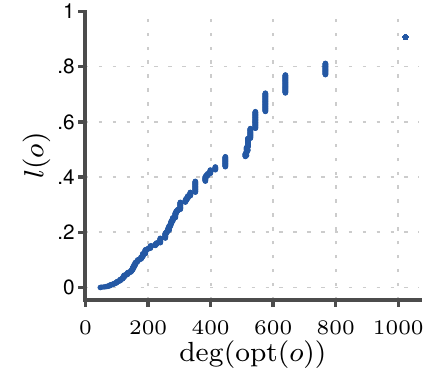}}%
\subfigure[$\abs{D} = 10^6$, $\rho = 0.98$]{\includegraphics[width=4.3cm]{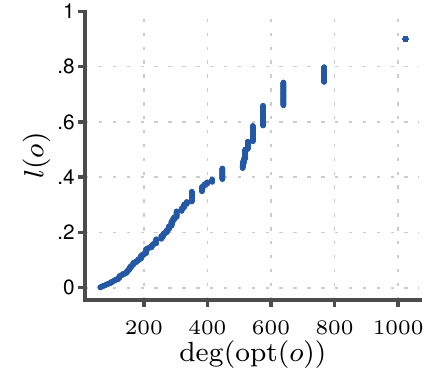}}
\caption{Measure $\pemp{o}$ as a function of $\df{\opt{o}}$, the number of free
parameters. Theorem~\ref{thr:asymptotic} implies that asymptotically $\pemp{o}$ is a
monotonic function of $\df{\opt{o}}$.  Each plot contain $1000$ random orders
of a dataset generated similarly as \emph{Path} dataset.  The number of
transactions is indicated in each sublabel. The variable $\rho$ is the
correlation between $\pemp{o}$ and $\df{\opt{o}}$.}
\label{fig:syntasympt}
\end{figure*}

From the results given in Figure~\ref{fig:syntasympt} we see that the measure
$\pemp{o}$ converges into an increasing function of $\df{\opt{o, p}}$ as the
number of transactions increases. Note that for the first two datasets there
are many orders which have the maximal number of free parameters, $1023$, yet
their measure values are small. Such behavior is hinted by
Proposition~\ref{prop:bound}: In such orders the model $\opt{o, p}$ is equal to
one segment, containing all items.  Proposition~\ref{prop:bound} states that
the necessary condition to produce this model as the model with the lowest BIC
score we must have an exponential number of transactions. Note that in the
larger datasets we have enough transactions to convince us that the model with
the worst BIC penalty term is actually the best.

\subsection{Spectral Methods with Real Datasets}

We continue our experiments with real-world datasets. The dataset
\emph{Paleo}\footnote{NOW public release 030717 available
from~\cite{fortelius05now}.}  contains information of species fossils found in
specific paleontological sites in Europe~\cite{fortelius05now}.  The dataset
\emph{Courses} contains the enrollment records of students taking courses at
the Department of Computer Science of the University of Helsinki. We took
datasets \emph{Anneal} and \emph{Mushroom} from the LUCS/KDD
repository~\cite{coenen03library}. A click-stream dataset
\emph{WebView-1}~\footnote{\url{http://www.ecn.purdue.edu/KDDCUP/data/BMS-WebView-1.dat.gz}}
was contributed by Blue Martini Software as the KDD Cup 2000
data~\cite{kohavi00bms}. The final dataset, \emph{Dna}, is DNA copy number
amplification data collection of human neoplasms~\cite{myllykangas06dna}.  Each
dataset was split into two, the first part was used for calculating the
order and the second part to calculate the actual score. To compute
the measures we also computed the scores for $1000$ random orders. The basic
characteristics and the running times are given in Table~\ref{tab:realbasic}.

\begin{table}[htb!]
\centering
\begin{tabular}{lrrrr}
\toprule
Name & $K$ & $\abs{D}$ & \% of 1s & Time  \\
\midrule
\emph{Anneal} & $73$ & $898$ & $20\%$ & $3ms$ \\
\emph{Courses} & $98$ & $3506$ & $5\%$ & $8ms$ \\
\emph{Dna} & $391$ & $4587$ & $1\%$ & $24ms$ \\
\emph{Mushroom} & $90$ & $8124$ & $25\%$ & $44ms$ \\
\emph{Paleo} & $139$ & $501$ & $5\%$ & $3ms$ \\
\emph{WebView-1} & $497$ & $59602$ & $1\%$ & $331ms$ \\
\bottomrule
\end{tabular}
\caption{Statistics and running times of datasets used in experiments. The 4th
column is the time needed to compute a score for one order.}
\label{tab:realbasic}
\end{table}

Measure $\pemp{o}$ was $0$ for all orders and datasets, except for
\emph{Anneal}, where $\pemp{\textsc{CO}} = \pemp{\textsc{CS}} = 0.03$. This
suggests that the almost all found orders were significantly good. To
illustrate this further we plotted the scores of the random and the spectral
orders in a box plot in Figure~\ref{fig:realbox}.

\begin{table}[htb!]
\centering
\begin{tabular}{lrrrr}
\toprule
\emph{Data} & \textsc{CO} & \textsc{CS} & \textsc{MI} & \textsc{M2}  \\
\midrule
\emph{Anneal} & $5.93$ & $6.04$ & $82.96$ & $47.89$ \\
\emph{Courses} & $55.47$ & $72.80$ & $54.31$ & $58.72$ \\
\emph{Dna} & $\infty$ & $\infty$ & $\infty$ & $\infty$ \\
\emph{Mushroom} & $39.13$ & $18.69$ & $46.74$ & $54.18$ \\
\emph{Paleo} & $129.39$ & $109.39$ & $13.18$ & $131.01$ \\
\emph{WebView-1} & $289.08$ & $229.18$ & $561.93$ & $764.20$ \\
\bottomrule
\end{tabular}
\caption{Measure $\pratio{o}$ of the spectral orders obtained from
the real datasets. The orders are explained in Section~\ref{sec:fiedler}.}
\label{tab:realrank}
\end{table}

\begin{figure}[htb!]
\centering
\includegraphics[width=7cm]{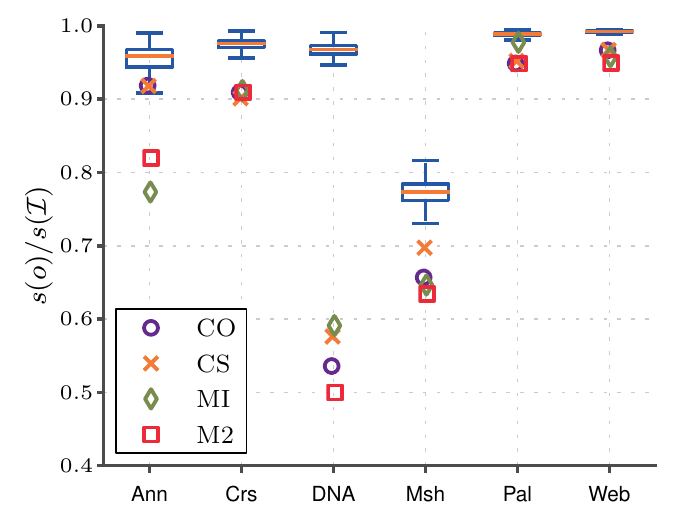}
\caption{A box plot of the scores of random and spectral orders. The scores
were normalized by dividing with $\score{\ifam{I}}$, the score of the
independent model.}
\label{fig:realbox}
\end{figure}

Examining Table~\ref{tab:realrank} reveals that the
best scores are achieved with \emph{Dna} dataset, suggesting that there is a
strong order structure in the dataset. Measures $\pratio{o}$ 
becomes infinite due to the finite precision of the floating point
number.  Also, the scores of the spectral orders for the \emph{Mushroom}
dataset are small but so are the scores of the random orders. This implies that
there are lot of dependencies in the dataset but the order structure is not
that strong. Interestingly enough, either \textsc{MI} or \textsc{M2} produces
the lowest score for $5$ datasets. One possible explanation is that \textsc{MI}
and \textsc{M2} are able to use the negative correlations and their score is
directly related to the log-likelihood of the model. On the other hand, the
order \textsc{CS} is the best for the \emph{Courses} dataset and the order
\textsc{MI} fails with \emph{Paleo} dataset.

\subsection{Improving the Scores with Greedy Search}

We conclude our experiments by studying the greedy met\-hod discussed in
Section~\ref{sec:greedy}. We applied the algorithm for the first part of each
dataset. As starting points we used the orders obtained by the spectral
methods. Our hope is that the greedy method improves the scores of spectral
order because it is able to use statistics of higher-order and because spectral
orders are only guaranteed to work with $L$-matrices (see
Section~\ref{sec:fiedler}).

For comparison we sampled up to $50$ orders from the PQ-tree produced by
Atkins' algorithm. Each order was used as a starting point for the greedy
algorithm.  We also tested the greedy method with $50$ random starting orders.
The obtained orders were then evaluated by computing the scores from the second
part of the dataset. The running times varied from $1$ second to $1$ hour,
depending on the size of the dataset.

\begin{table}[htb!]
\centering
\begin{tabular}{lrrrrr}
\toprule
Name & \textsc{CO} & \textsc{CS} & \textsc{MI} & \textsc{M2} & \textsc{RND} \\
\midrule
\emph{Anneal} & $5.05$ & $3.39$ & $2.98$ & $3.25$ & $3.41$ \\
\emph{Courses} & $2.50$ & $2.02$ & $1.56$ & $2.82$ & $1.68$ \\
\emph{Dna} & $7.36$ & $9.26$ & $12.57$ & $8.80$ & $2.31$ \\
\emph{Mushroom} & $7.28$ & $9.18$ & $9.91$ & $13.33$ & $7.66$ \\
\emph{Paleo} & $0.80$ & $0.98$ & $0.61$ & $0.98$ & $0.58$ \\
\emph{WebView-1} & $0.92$ & $1.62$ & $1.20$ & $1.59$ & $0.63$ \\
\bottomrule
\end{tabular}
\caption{Gains of the scores when using the greedy method compared to the
scores of the starting points. The percentages are computed as $100\% - 100\%
\times \score{o_1} / \score{o_2}$, where $o_1$ is the final order and $o_2$ is
the starting order.}
\label{tab:realcomp}
\end{table}

\begin{table*}[htb!]
\centering

\begin{tabular}{lr@{}r rrrrr}
\toprule
& \multicolumn{1}{c}{$\pemp{o}$} && \multicolumn{5}{c}{$\pratio{o}$} \\
\cmidrule{2-2}
\cmidrule{4-8}
\emph{Data} & \textsc{RND} && \textsc{CO} & \textsc{CS} & \textsc{MI} & \textsc{M2} & \textsc{RND}  \\
\midrule
\emph{Anneal}   & $0.12$ &  & $20.1$ & $14.6$ & $104.2$ & $66.9$ & $2.9$ \\
\emph{Courses}  & $0.10$ &  & $100.3$ & $110.9$ & $79.8$ & $109.8$ & $3.4$ \\
\emph{Dna}      & $0.03$ &  & $\infty$ & $\infty$ & $\infty$ & $\infty$ & $4.9$ \\
\emph{Mushroom} & $0.01$ &  & $75.2$ & $56.1$ & $100.9$ & $136.1$ & $7.5$ \\
\emph{Paleo}    & $0.13$ &  & $181.5$ & $169.7$ & $26.9$ & $196.3$ & $2.8$ \\
\emph{WebView-1}   & $0$ &  & $453.7$ & $570.5$ & $871.4$ & $940.0$ & $11.6$ \\
\bottomrule
\end{tabular}

\caption{Measures $\pemp{o}$ and $\pratio{o}$ of the orders obtained using the
greedy method with the spectral and random orders as the starting points.
Measure values $\pemp{o}$ for all spectral orders were $0$ and are
omitted from the table.}
\label{tab:realhill}
\end{table*}

By comparing the measure $\pratio{o}$ given in
Table~\ref{tab:realhill} to the values given in
Tables~\ref{tab:realrank} we see that the greedy
method does not perform well alone: when random orders are used as starting
points, the discovered orders are worse than the spectral orders.  However,
greedy method is useful when spectral orders are used as starting points.  From
Table~\ref{tab:realcomp} we see that the greedy method improves the scores of
the spectral orders up to $13$ percents. The gain of the score depends on
the dataset but less on the spectral method used. The scores for
\emph{Courses}, \emph{Paleo} and \emph{WebView-1} datasets improve up to $3$
percents where as the biggest gains are with \emph{Dna}, and \emph{Mushroom}
datasets where the scores improve by $7$ -- $13$ percents.

\section{Related Work}
\label{sec:related}

A popular choice for measuring the goodness of an order is the Lazarus count,
the number of $0$s between $1$s in a row. If the Lazarus count is $0$, then the
data is said to have the consecutive ones property.  In some cases this has a
natural interpretation, for example, in a paleontological data a taxon becomes
extant and then extinct. If the matrix has a consecutive
ones property, then the Fiedler vector of the co-occurrence matrix returns the
correct order~\cite{atkins99seriation}. It is an open question why the spectral
method works also with the noisy data. An alternative approach has been
suggested in~\cite{puolamaki06seriation}, where the authors construct a
probabilistic model encapsulating the consecutive ones property.

Ranking or sorting items can be seen as deducing a linear order for the items.
Applications for ranking are, for example, finding relevant web
pages~\cite{page98page,kleinberg99hits} or ranking database query
results~\cite{fagin04comparing}. One of the key differences in these approaches
and ours is that in our case the reversed order is as good as the original.  We
are interested in finding the order in which the dependent attributes are
close.  This goal is different than finding the most relevant items.

In some cases, linear orders is too strict a structure, in such case partial
orders (transitive, asymmetric, and reflexive relation) may be more natural.
For example, consider the course enrollment data, in which the same basic
course is prerequisite for several advanced independent courses.  Finding
partial orders have been studied for example in~\cite{ukkonen05partial}.
General partial orders seem to be very complex objects. A simple but yet
interesting subclass of partial orders are bucket
orders~\cite{gionis06bucket,fagin04comparing}. A problem of searching fragments
of order, that is finding a collection of linear orders defined for a subset of
items has been studied in~\cite{gionis03fragments}.

\section{Conclusions}
\label{sec:conclusion}
We studied the concept of measuring the goodness of an order. We say that the
order is good if the heavily dependent attributes are close to each other.  In
order to define the score we introduce an order-sensitive model and then use
the BIC score to rank the model.
To find the optimal model we created a dynamic program and show that it can be
evaluated in $O\fpr{K\min\fpr{K, \log \abs{D}}^2}$ time. Hence, our method
works well even for the datasets with vast number of items.

We provided asymptotic results showing that the score is connected the number
of free parameters in the model. We also demonstrate this result empirically
with synthetic data.

We compared the score of the order against the scores of random orders. We say
that the order is good if the score is exceptionally lower than the score of
the random order. We used two different measures, $\pemp{o}$ the proportion of
random scores having the smaller score than $o$, and $\pratio{o}$, the ratio of
the score $\score{o}$ and the average score of a random order. One of our
future goals is to develop a more refined measure for comparing the score
against the scores of random orders.

We evaluate the measures with several spectral and greedy methods. In our
experiments we found out that Fiedler orders of the mutual information matrices
(see Section~\ref{sec:fiedler}) produced better results for our datasets than
the orders based on co-occurrences or cosine distance.  In our experiments, the
greedy optimization improved the scores of spectral orders up to $13$ percents.

One of our future goals is to extend the current method for more general
partial orders (see Section~\ref{sec:related}). Such an extension is not
trivial since in general case finding the best model is a computationally
difficult task.

\bibliographystyle{IEEEtran}
\bibliography{order}  

\appendix
\section{Proof of Theorem~\ref{thr:asymptotic}}

Throughout the whole section we will assume that the underlying distribution
$p$ has no zero probabilities, $p > 0$.

Given a distribution $q$ and a set of segments $\sfam{C}$ we will adopt the
notation $h(A = t) = q(A = t; \sfam{C})$ to mean the unique distribution $h \in
\model{\sfam{C}}$ such that $q(C) = h(C)$ for any $C \in \sfam{C}$.

We begin by considering indicator functions. Given an itemset $X =
\enset{a_{i_1}}{a_{i_L}}$ we define an indicator function
$\funcdef{S_X}{\set{0,1}^K}{\set{0,1}}$ to be $S_X(t) = t_{i_1}\cdots t_{i_L}$.
Thus $S_X(t) = 1$ if and only if all $t_i$ corresponding to the elements of $X$
have value of 1. We also define $S_\emptyset(t) = 1$.

\begin{lemma}
\label{lem:basis}
Given a family of itemsets $\ifam{X}$, the set of indicator functions
$\mathcal{S} = \set{S_X ; X \in \ifam{X}}$ are independent, that is, $\sum_{X
\in \ifam{X}} r_X S_X(t) = 0$ is possible only for $r_X = 0$. If $\ifam{X}$
contains all itemsets, then $\mathcal{S}$ forms a basis.
\end{lemma}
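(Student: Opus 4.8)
The plan is to prove the stronger statement that the indicator functions of \emph{all} itemsets are linearly independent, from which the claim for an arbitrary family $\ifam{X}$ follows immediately: any subcollection of a linearly independent set is itself linearly independent. This reduction is also what delivers the second assertion for free, since the space of real functions on $\set{0,1}^K$ has dimension $2^K$, which is exactly the number of itemsets; once the full collection of $2^K$ indicator functions is shown to be independent, it must span the space and hence form a basis.

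First I would suppose that $\sum_{X} r_X S_X(t) = 0$ holds for every transaction $t \in \set{0,1}^K$, with $X$ ranging over all itemsets, and aim to force every $r_X$ to vanish. The key device is to evaluate this identity at the particular transactions $e_Y \in \set{0,1}^K$ defined, for each itemset $Y$, by $(e_Y)_i = 1$ if $a_i \in Y$ and $(e_Y)_i = 0$ otherwise. Since $S_X(e_Y)$ is a product of the coordinates of $e_Y$ indexed by $X$, it equals $1$ precisely when $X \subseteq Y$ and equals $0$ otherwise. Substituting $t = e_Y$ therefore collapses the sum to $\sum_{X \subseteq Y} r_X = 0$, a relation that must hold for every itemset $Y$.

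Then I would conclude $r_Y = 0$ for all $Y$ by induction on $\abs{Y}$. The base case $Y = \emptyset$ gives $r_\emptyset = 0$ directly. For the inductive step, the relation $\sum_{X \subseteq Y} r_X = 0$ splits as $r_Y + \sum_{X \subsetneq Y} r_X$, and every proper subset $X \subsetneq Y$ already has $r_X = 0$ by the induction hypothesis, forcing $r_Y = 0$. Equivalently, this is just M\"obius inversion over the Boolean lattice of subsets, but the elementary induction avoids invoking it.

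I do not expect a serious obstacle here; the two points requiring any care are the reduction at the start (observing that independence of the full family yields independence of every subfamily) and the choice of the test vectors $e_Y$, which is precisely what renders the resulting linear system triangular with respect to set inclusion and thus trivially solvable. The basis claim then follows from the dimension count alone: $2^K$ linearly independent functions living in a $2^K$-dimensional function space must span it.
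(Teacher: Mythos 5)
Your proof is correct, but it takes a genuinely different route from the paper's. Both arguments begin with the same reduction to the full family of all itemsets, but the paper then establishes independence by exhibiting an explicit invertible linear correspondence between $\mathcal{S}$ and the standard basis of point indicators $\mathcal{T} = \set{T_v}$, where $T_v(t) = 1$ iff $t = v$: it writes $S_X = \sum_{Y \supseteq X} T_Y$ and inverts this by inclusion--exclusion, $T_X = \sum_{Y \supseteq X} (-1)^{\abs{Y} - \abs{X}} S_Y$, so that $\mathcal{S}$ and $\mathcal{T}$ span the same space and independence and the basis claim follow simultaneously. You instead test a putative dependence $\sum_X r_X S_X \equiv 0$ on the characteristic vectors $e_Y$, exploiting that $S_X(e_Y) = 1$ iff $X \subseteq Y$, which makes the resulting homogeneous system triangular with respect to set inclusion; induction on $\abs{Y}$ then kills every coefficient, and the basis claim comes from a dimension count ($2^K$ independent functions in a $2^K$-dimensional space). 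The two proofs are dual faces of the same M\"obius-inversion phenomenon on the Boolean lattice, as you yourself note. What your route buys: it is more elementary and self-contained, since you never need to state or verify the inclusion--exclusion inversion formula, which the paper asserts without proof; triangularity plus induction does all the work. What the paper's route buys: the explicit expansion of the point masses $T_X$ in terms of the $S_Y$, which is strictly more information than independence alone and is the standard change-of-basis identity between these two bases.
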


\begin{proof}
It is sufficient to prove the lemma for the case when $\ifam{X} = \set{X; X \in
A}$ contains all possible itemsets.

For a $v \in \set{0, 1}^K$, Consider a function
$\funcdef{T_v}{\set{0,1}^K}{\set{0,1}}$ for which $T_v(t) = 1$ if and only if
$t = v$. Obviously, the set of functions $\mathcal{T} = \set{T_v; v \in \set{0,
1}^K}$ is linearly independent. Hence to prove the lemma we need to show that
we can build a linear mapping from $\mathcal{T}$ onto $\mathcal{S}$ that has an
inverse.

To prove this we first note that $S_X(t) = \sum_{Y \supseteq X} T_Y(t)$. This
can be inverted using the inclusion-exclusion principle $T_X(t) = \sum_{Y
\supseteq X} (-1)^{\abs{Y} - \abs{X}}S_Y(t)$. This completes the proof.
\end{proof}

A direct calculation implies that the distributions in $\model{\sfam{C}}$ have
a particular exponential form.

\begin{lemma}
\label{lem:exponential}
Let $\sfam{X} = \set{X; X \subseteq C \in \sfam{C}}$ be the downward closure of
$\sfam{C}$. A distribution $p$ is in $\model{\sfam{C}}$ if only if $p(A =
t) = \exp\fpr{\sum_{X \in \sfam{X}} r_X S_X(t)}$ for some specific constants
$\set{r_X}$.
\end{lemma}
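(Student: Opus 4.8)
The plan is to prove the two implications separately, with the backward direction carrying essentially all of the work. For the forward direction, suppose $p \in \model{\sfam{C}}$. Since $p > 0$ all marginals are positive, so I may take the logarithm of the defining factorization (Eq.~\ref{eq:probfact}) to get $\log p\fpr{A=t} = \sum_{i=1}^{L}\log p\fpr{C_i = t_{C_i}} - \sum_{i=1}^{L-1}\log p\fpr{S_i = t_{S_i}}$. Each summand is a real function of $t$ that depends only on the coordinates lying in a member of $\sfam{X}$ (namely some $C_i$, or some $S_i \subseteq C_i$). Applying Lemma~\ref{lem:basis} to the subcube $\set{0,1}^{C_i}$ shows that $\set{S_X; X \subseteq C_i}$ spans all functions of $t_{C_i}$, so each term expands as a linear combination of indicators $S_X$ with $X \in \sfam{X}$; the constant $S_\emptyset = 1$ absorbs the normalization. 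Collecting coefficients over all $X \in \sfam{X}$ yields $\log p\fpr{A=t} = \sum_{X \in \sfam{X}} r_X S_X(t)$, which is the claimed exponential form.

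For the backward direction I would induct on the number of segments $L$. The base case $\sfam{C} = \set{A}$ is immediate, since $\model{\set{A}}$ contains every distribution and Lemma~\ref{lem:basis} already provides the exponential form for any positive $p$. For the inductive step I first group the given expansion as $\log p\fpr{A=t} = \sum_{i=1}^{L}\psi_i(t_{C_i})$ by assigning each $X \in \sfam{X}$ to one segment $C_i$ that contains it. The geometric fact I would establish is that, because the members of $\sfam{C}$ are consecutive item segments forming an antichain ordered by first attribute, both endpoints of the segments are strictly increasing; hence $R = C_L - S_{L-1}$ is disjoint from every earlier segment. Consequently the coordinates of $R$ occur only in $\psi_L$, and I can write $\log p\fpr{A=t} = \psi_L(t_{C_L}) + g(t_P)$, where $P = \bigcup_{i<L}C_i$ and $g$ collects the remaining potentials.

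From the product form $p\fpr{A=t} \propto \exp\fpr{\psi_L(t_{C_L})}\exp\fpr{g(t_P)}$ I would read off the conditional independence of $R$ from the rest given $S_{L-1}$: summing over $t_R$ gives $p\fpr{P = t_P} \propto \exp\fpr{g(t_P)}\, m(t_{S_{L-1}})$, where $m$ depends only on $t_{S_{L-1}} \subseteq C_{L-1}$, so that $p\fpr{R \mid P} = p\fpr{R \mid S_{L-1}}$ and therefore $p\fpr{A} = p\fpr{C_L}\,p\fpr{P}/p\fpr{S_{L-1}}$. Finally, $\log p\fpr{P=t_P} = g(t_P) + \log m(t_{S_{L-1}}) + \mathrm{const}$ is still in potential form for the reduced cover $\set{C_1,\ldots,C_{L-1}}$ (absorbing $\log m$ into $\psi_{L-1}$, which is legitimate since $S_{L-1}\subseteq C_{L-1}$), so the inductive hypothesis factors $p\fpr{P}$ and the two factorizations combine to give $p \in \model{\sfam{C}}$.

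The main obstacle is precisely this backward peeling argument: it is where the purely algebraic ``clique potential'' representation must be converted into the probabilistic factorization of Eq.~\ref{eq:probfact}, and it is in effect Hammersley--Clifford for the decomposable chain of cliques $C_1,\ldots,C_L$. Two points need care: verifying that $R = C_L - S_{L-1}$ meets no earlier segment, so that $\psi_L$ is the only potential touching $R$, and checking that marginalizing over $R$ returns a distribution on $P$ that is again in potential form, so that the induction closes. The forward direction, by contrast, is a routine take-the-logarithm-and-expand computation.
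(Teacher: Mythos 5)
Your proposal is correct and takes essentially the same route as the paper's proof: the forward direction is the identical take-logarithms-and-expand argument via Lemma~\ref{lem:basis}, and the backward direction is the same induction that peels one end segment off the cover, uses the interval/antichain structure to split the potentials, reads off a conditional independence, and recurses on the reduced cover. The only difference is cosmetic and mirror-image: you peel the last segment $C_L$ and marginalize over $R = C_L - S_{L-1}$, whereas the paper peels the first segment $C_1$ and conditions on $H = C_1 \cap C_2$, the two being equivalent since the model is invariant under reversing the order.
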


\begin{proof}
Assume that $p \in \model{\sfam{C}}$. Then $p$ has a form of $p(A) = \prod
p\fpr{C_i} / \prod p\fpr{C_i \cap C_{i - 1}}$. Fix $i$.  Lemma~\ref{lem:basis}
now implies that we can write $\log p\fpr{C_i = t_{C_i}} = \sum_{X \subseteq
C_i} a_X S_X(t)$ or in other words $p\fpr{C_i = t_{C_i}} = \exp\fpr{\sum_{X
\subseteq C_i} a_X S_X(t)}$. We can perform this factorization for each
$p(C_i)$ and for each $p\fpr{C_i \cap C_{i - 1}}$. By combining these
factorizations we arrive at the desired result.

We will prove the other direction by induction. Let $p$ be the distribution
having the exponential form and let $\set{r_X}$ be the needed constants. Shorten
$V = C_1$, $W = A - (C_1 - C_2)$, and $H = C_1 \cap C_2$.

Using the
argument in the first paragraph we can write $p\fpr{H = t_H} = \exp\fpr{\sum_{X
\subseteq H} u_X S_X(t)}$ for some constants $\set{u_X}$.
Note that if $X \in \sfam{X}$ is such that $X \nsubseteq H$, then either $X \subseteq V$
or $X \subseteq W$.
We have
\[
\begin{split}
&\log p\fpr{A - H = t_{A - H} \mid H = t_H} \\
& \quad =\sum_{X \in \sfam{X}} r_X S_X(t) - \sum_{X \in \sfam{X}, X \subseteq H} u_XS_X(t) \\
& \quad =\sum_{X \in \sfam{X}, X \subseteq V} r_X S_X(t) + \sum_{X \in \sfam{X}, X \subseteq W} r_X S_X(t) \\
& \qquad - \sum_{X \in \sfam{X}, X \subseteq H} (r_X + u_X)S_X(t).
\end{split}
\]

The third term depends only on $t_H$, hence it is a part of the
normalization constant of the conditioned distribution. The first term depends only
on the values of $t_V$ while the second term depends only on the values of $t_W$.
This leads to
\[
\begin{split}
p(A) & = p(A - H \mid H)p(H) \\
& = p(V - H \mid H)p(W - H \mid H)p(H)\\
& = \frac{p(V)p(W)}{p(H)} = \frac{p(C_1)p(W)}{p(C_1 \cap C_2)}.
\end{split}
\]
The distribution $p(W) = p(W \mid H)p(H)$ has an exponential form, hence
if we assume inductively that $p(W) \in \model{\sfam{C} - C_1}$, then the
above equality proves that $p(A) \in \model{\sfam{C}}$.
\end{proof}

Our next step is to study $\opt{o, p}$. Let $\sfam{V}, \sfam{W} \in \seg{o}$.
We say that $\sfam{W}$ \emph{refines} $\sfam{V}$ if for each $V \in \sfam{V}$
there is $W \in \sfam{W}$ such that $V \subseteq W$. In this case we write
$\sfam{V} \sqsubset \sfam{W}$. We will now show that $\opt{o, p}$ is the
minimal set with the respect of the relation $\sqsubset$.

\begin{theorem}
\label{thr:optseg}
Let $\sfam{O} = \opt{o, p}$. If $\sfam{C}$ is such that $p \in
\model{\sfam{C}}$, then $\sfam{O} \sqsubset \sfam{C}$. Moreover,
$\sfam{O}$ is unique.
\end{theorem}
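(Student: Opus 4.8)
The plan is to translate the analytic condition $p \in \model{\sfam{C}}$ into a purely combinatorial containment statement on a fixed finite object, and then to minimise a combinatorial quantity over segment covers. First I would expand $\log p$ in the basis of indicator functions: by Lemma~\ref{lem:basis} the functions $\set{S_X}$ over all itemsets form a basis, so there are unique coefficients with $\log p(A = t) = \sum_X r_X S_X(t)$. Write $\sfam{F} = \set{X ; r_X \neq 0}$ for its support. Lemma~\ref{lem:exponential} says $p \in \model{\sfam{C}}$ exactly when $\log p$ lies in the span of $\set{S_X ; X \in \downarrow\sfam{C}}$, where $\downarrow\sfam{C}$ is the downward closure of $\sfam{C}$; by uniqueness of the expansion this is equivalent to $\sfam{F} \subseteq \downarrow\sfam{C}$, i.e. every $X \in \sfam{F}$ is contained in some segment $C \in \sfam{C}$. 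Since each $C$ is an item segment (an interval), $X \subseteq C$ iff the interval hull $\hat{X}$ (the shortest segment containing $X$) satisfies $\hat{X} \subseteq C$. Hence, writing $\sfam{G}$ for the set of hulls $\hat{X}$ with $X \in \sfam{F}$ and $\abs{X} \geq 2$, membership $p \in \model{\sfam{C}}$ is equivalent to: every $g \in \sfam{G}$ lies inside some segment of $\sfam{C}$ (hulls of singletons and of $\emptyset$ impose nothing, as $\sfam{C}$ already covers $A$).

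The second, and most delicate, step is a counting identity: for any cover $\sfam{C} \in \seg{o}$, $\df{\sfam{C}} = \abs{\downarrow\sfam{C}} - 1$, the number of nonempty itemsets contained in some segment of $\sfam{C}$. I would prove this by charging each such nonempty itemset $T$ to the segment $C_i$ of smallest index with $T \subseteq C_i$. Because the segments are intervals with strictly increasing endpoints, a subset $T \subseteq C_i$ lies in an earlier segment iff $T \subseteq C_i \cap C_{i - 1} = S_{i - 1}$; thus $C_i$ is charged exactly $2^{\abs{C_i}} - 2^{\abs{S_{i - 1}}}$ sets (and $2^{\abs{C_1}} - 1$ for $i = 1$), and summing telescopes to Eq.~\ref{eq:bic}. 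This identity is the crux, since it makes $\df$ manifestly monotone: if $\sfam{V} \sqsubset \sfam{W}$ then $\downarrow\sfam{V} \subseteq \downarrow\sfam{W}$, so $\df{\sfam{V}} \leq \df{\sfam{W}}$, with equality only if $\downarrow\sfam{V} = \downarrow\sfam{W}$, and hence (the maximal elements of a downward-closed family being exactly the segments) only if $\sfam{V} = \sfam{W}$.

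Finally I would exhibit the minimiser explicitly and conclude. Let $\sfam{O}^*$ consist of the inclusion-maximal hulls in $\sfam{G}$ together with a singleton $\set{a_i}$ for every attribute covered by no hull. Maximal hulls form an antichain of intervals with strictly increasing endpoints, the added singletons lie in no hull, and together they cover $A$, so $\sfam{O}^* \in \seg{o}$; and every $g \in \sfam{G}$ sits inside a maximal hull, so $p \in \model{\sfam{O}^*}$ by the first step. For any valid $\sfam{C}$ I would check $\sfam{O}^* \sqsubset \sfam{C}$ segment by segment: a maximal hull equals $\hat{X_0}$ for some $X_0 \in \sfam{F}$, and validity of $\sfam{C}$ gives $X_0 \subseteq C$, hence $\hat{X_0} \subseteq C$, for some $C \in \sfam{C}$; a singleton is covered because $\sfam{C}$ covers $A$. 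Thus $\sfam{O}^*$ refines every valid cover, so by the monotonicity of the previous step $\df{\sfam{O}^*} \leq \df{\sfam{C}}$ with equality forcing $\sfam{C} = \sfam{O}^*$. Therefore $\sfam{O}^*$ is the unique minimiser of $\df$ among valid covers, i.e. $\sfam{O}^* = \opt{o, p}$, which is consequently unique and refines every $\sfam{C}$ with $p \in \model{\sfam{C}}$. The main obstacle is the telescoping identity $\df{\sfam{C}} = \abs{\downarrow\sfam{C}} - 1$: without it, comparing degrees of overlapping covers under refinement is awkward, whereas with it both monotonicity and uniqueness become immediate.
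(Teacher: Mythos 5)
Your proof is correct, but it takes a genuinely different route from the paper's. Both arguments rest on the same foundation --- Lemma~\ref{lem:basis} (linear independence of the indicator functions) and Lemma~\ref{lem:exponential} (the exponential-form characterization of $\model{\sfam{C}}$) --- but the architecture differs. The paper argues by contradiction: assuming $\sfam{O} \not\sqsubset \sfam{C}$, it forms the collection $\sfam{V}$ of maximal segments among the intersections $\set{C \cap O;\ C \in \sfam{C}, O \in \sfam{O}}$, uses linear independence to conclude that the support of $\log p$ lies in $\downarrow\sfam{O} \cap \downarrow\sfam{C} = \downarrow\sfam{V}$ (writing $\downarrow$ for downward closure), so that $p \in \model{\sfam{V}}$, and then contradicts minimality via the assertion ``clearly $\df{\sfam{V}} < \df{\sfam{O}}$''. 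You instead proceed constructively: you convert membership $p \in \model{\sfam{C}}$ into the combinatorial condition that every interval hull of a support set of $\log p$ lies inside some segment of $\sfam{C}$, you exhibit the minimizer explicitly as the maximal hulls plus leftover singletons, and --- crucially --- you prove the identity $\df{\sfam{C}} = \abs{\downarrow\sfam{C}} - 1$ by the charging/telescoping argument against Eq.~\ref{eq:bic}. That identity is precisely what the paper's ``clearly'' sweeps under the rug: strict monotonicity of $\df$ under strict refinement is not immediate from Eq.~\ref{eq:bic} alone, and your counting argument supplies it cleanly. Your route buys three things the paper's does not make explicit: it is constructive, giving a concrete description of $\opt{o, p}$ (each optimal segment is a maximal hull of a support set, which could even be exploited algorithmically); the uniqueness claim --- which the theorem states but the paper's proof never actually addresses --- falls out of the equality case of your monotonicity; and the degree comparison is rigorously justified rather than asserted. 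What the paper's route buys in exchange is brevity: the intersection construction disposes of all cases at once, at the cost of leaving the degree inequality and uniqueness to the reader.
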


\begin{proof}
Assume otherwise and define $\sfam{V}$ to be the maximal segments of the set
of segments
$\set{C \cap O; C \in \sfam{C}, O \in \sfam{O}}$. Clearly $\df{\sfam{V}} <
\df{\sfam{O}}$. We need to show that $p \in \model{\sfam{V}}$. Let $\sfam{X}$
and $\sfam{Y}$ be the downward closure of $\sfam{O}$ and $\sfam{C}$,
respectively. Let $\set{r_X}$, $\set{s_Y}$ be the sets of constants in
Lemma~\ref{lem:exponential} when applied to $\ifam{X}$ and $\ifam{Y}$,
respectively. We have
\[
	\sum_{X \in \sfam{X}} r_X S_X(t) = \log p(A = t) = \sum_{Y \in \sfam{Y}} s_Y S_Y(t).
\]
Since the functions $S_X$ are linearly independent we must have that $r_Z =
s_Z$ for any $Z \in \sfam{X} \cap \sfam{Y}$ and $r_X = 0$, $s_Y = 0$ otherwise.
The downward closure of $\sfam{V}$ is exactly $\sfam{X} \cap \sfam{Y}$.  But
now Lemma~\ref{lem:exponential} implies that $p \in \model{\sfam{V}}$ which is
a contradiction by the minimality of $\sfam{O}$.
\end{proof}

We will need the following technical lemma that states that the difference
between the likelihood terms is bounded.

\begin{lemma}
\label{lem:bound}
Assume two orders $o_1$ and $o_2$. Let $\sfam{O}_i = \opt{o_i, p}$ and let
$\sfam{C}_i \in \seg{o_i}$ such that $\sfam{O}_i \sqsubset \sfam{C}_i$. Let $D$
be a dataset with $N = \abs{D}$ transactions and let $h_i = q_D\pr{A;
\sfam{C}_i} \in \model{\sfam{C}_i}$. Let $Z_N = N\ent{h_1} - N\ent{h_2}$
and let $Z$ be the limit distribution as $N \to \infty$.  Then $P(Z_N < t) \to P(Z < t)$ uniformly in $t$ as $N \to \infty$. Moreover, $P(Z = \infty) = 0$.

\end{lemma}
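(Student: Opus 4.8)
The plan is to express $N\ent{h_i}$ as an empirical average of the log-likelihood function so that a central limit theorem applies. By Lemma~\ref{lem:exponential}, each $h_i$ has the exponential form $h_i(A = t) = \exp\fpr{\sum_{X} r_X^{(i)} S_X(t)}$, where the constants $\set{r_X^{(i)}}$ are determined by the marginals $q_D(C)$ for $C \in \sfam{C}_i$. Since $\sfam{O}_i \sqsubset \sfam{C}_i$ and $p \in \model{\sfam{O}_i}$, these empirical marginals converge almost surely to the true marginals $p(C)$ as $N \to \infty$, and hence the fitted coefficients $r_X^{(i)}$ converge to the coefficients of the true distribution $p$. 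The key observation is that $N\ent{h_i} = -\sum_{t \in D} \log h_i(A = t)$ is a sum over the $N$ transactions, so $Z_N = \sum_{t \in D}\pr{\log h_2(A=t) - \log h_1(A=t)}$ is a sum of $N$ terms, each depending on a single transaction.

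First I would fix the fitted distributions at their limiting values. Define the idealized statistic $\tilde Z_N = \sum_{t \in D} g(t)$ where $g(t) = \log p_2^\infty(A=t) - \log p_1^\infty(A=t)$ and $p_i^\infty$ is the distribution with the limiting coefficients (i.e.\ the fit of $p$ itself within $\model{\sfam{C}_i}$). Because each transaction is drawn i.i.d.\ from $p$ and $g$ is a fixed bounded function (bounded since $p > 0$ guarantees all log-probabilities are finite), the ordinary central limit theorem gives that $\tilde Z_N$, suitably centered, converges in distribution; in fact $\tilde Z_N$ is a sum of i.i.d.\ bounded variables, so its distribution function converges. The main work is then to show $Z_N$ and $\tilde Z_N$ have the same limiting behaviour, which reduces to bounding the difference coming from using the fitted coefficients $r_X^{(i)}$ in place of their limits. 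This difference is $\sum_{t \in D}\sum_X (r_X^{(i)} - r_X^{(i),\infty}) S_X(t)$, and each coefficient gap is a smooth function of the empirical marginals that behaves like $O_p(N^{-1/2})$ by the delta method; summed over $N$ transactions this contributes an $O_p(N^{1/2})$ correction with the same order as the fluctuations of $\tilde Z_N$ itself, so some care is needed here.

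To obtain the \emph{uniform} convergence $P(Z_N < t) \to P(Z < t)$ I would invoke that the limit distribution $Z$ is continuous. This follows because, after centering, $\tilde Z_N$ converges to a (possibly degenerate) normal law by the CLT; if the limiting variance is strictly positive the limit is a nondegenerate Gaussian and its distribution function is continuous, so pointwise convergence of distribution functions upgrades to uniform convergence by Pólya's theorem. The degenerate case, where the variance vanishes and $Z$ is a constant, must be handled separately but still yields a continuous (point-mass) limit with uniform convergence away from the atom; I expect the cleaner route is to absorb the deterministic drift $N(\ent{p_1^\infty} - \ent{p_2^\infty})$ into the centering and argue on the stochastic part. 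The hard part will be controlling the coefficient-estimation error rigorously and confirming it does not destroy the uniform convergence — that is, showing the fitted-versus-limiting coefficient discrepancy, though of the same $\sqrt N$ order, enters only through a vanishing correction to the standardized statistic. Finally, $P(Z = \infty) = 0$ is immediate: since $p > 0$, every $\log h_i(A=t)$ is finite and the limiting function $g$ is bounded, so $Z$ is a finite-variance random variable and assigns zero mass to $\infty$.
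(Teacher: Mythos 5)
There is a genuine gap, and it sits at the center of your strategy. Your plan rests on the idealized statistic $\tilde Z_N = \sum_{t \in D} g(t)$ with $g(t) = \log p_2^\infty(A=t) - \log p_1^\infty(A=t)$, where $p_i^\infty$ is the fit of $p$ within $\model{\sfam{C}_i}$. But the hypothesis $\sfam{O}_i \sqsubset \sfam{C}_i$ means the downward closure of $\sfam{O}_i$ is contained in that of $\sfam{C}_i$, so by Lemma~\ref{lem:exponential} we have $p \in \model{\sfam{C}_i}$ for both $i$, and the fit of $p$ inside a model already containing $p$ is $p$ itself. Hence $p_1^\infty = p_2^\infty = p$, so $g \equiv 0$ and $\tilde Z_N \equiv 0$: the CLT applied to it is vacuous, the ``nondegenerate Gaussian'' case never occurs, and the entire statistic $Z_N$ lives in what you call the correction term. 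That correction is also not of order $\sqrt{N}$: because the fitted coefficients solve the moment-matching (score) equations, the linear term in the coefficient errors cancels, and what remains is a quadratic form of order $O_p(1)$ --- this cancellation is exactly the mechanism behind Wilks' theorem. Consequently the limit $Z$ is \emph{not} normal; it is a difference of (dependent) chi-square--type variables, and the paper states explicitly in Section~\ref{sec:permutation} that the limiting score distribution is not normal. An argument whose endpoint is a Gaussian limit is proving the wrong statement, and deferring the ``hard part'' defers precisely all of the content of the lemma.

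For comparison, the paper's proof goes directly where your correction-term analysis would have to go. It takes the downward closures $\sfam{X}_i$ of the $\sfam{O}_i$, forms $\sfam{Y} = \sfam{X}_1 \cap \sfam{X}_2$, uses Csisz\'ar's theorem to produce the empirical exponential-family fit $g$ on $\sfam{Y}$, and writes
\[
Z_N = \pr{N\ent{h_1} - N\ent{g}} - \pr{N\ent{h_2} - N\ent{g}},
\]
where each bracket is a likelihood-ratio statistic for nested models that both contain $p$; by the classical Wald/Wilks result each converges to a (scaled) chi-square, so $Z$ is a difference of chi-squares, giving $P(Z = \infty) = 0$, and continuity of this limit upgrades pointwise convergence of distribution functions to uniform convergence by P\'olya's theorem (the one ingredient your proposal shares with the paper). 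To repair your proof you would have to replace the CLT step by a second-order expansion of the log-likelihood around the true coefficients --- that is, reprove or invoke the Wilks-type asymptotics the paper cites.
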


\begin{proof}
Let $\sfam{X}_i$ be the downward
closure of $\sfam{O}_i$ and let $\set{r^i_O}$ be the set of constraints given
by Lemma~\ref{lem:exponential} when applied to $\sfam{O}_i$ and $p$. Note
that we have
\[
	\sum_{O \in \sfam{X}_1} r^1_O S_O(t) = \log p(A = t) = \sum_{O \in \sfam{X}_2} r^2_O S_O(t).
\]
Since the functions $S_O$ are linearly independent we must have $r^1_O = r^2_O$
whenever $O \in \sfam{X}_1 \cap \sfam{X}_2$ and $r^i_O = 0$ otherwise.

Let $\sfam{Y} = \sfam{X}_1 \cap \sfam{X}_2$. We know that
(see Theorem~3.1 in \cite{csiszar75divergence}) there is a set of constants
$\set{r_Y}$ such that the distribution $g$ defined as $g(A = t) =
\exp\fpr{\sum_{Y \in \sfam{Y}} r_YS_Y(t)}$ has the property $g(S_Y = 1) =
q_D(S_Y = 1)$. Now a classic result (see \cite{wald43test} for example)
implies that $N\ent{h_i} - N\ent{g}$ converges into a chi-square distribution.
A known result (see Theorem~2.3 in \cite{vaart98statistics}) implies that the
distribution of $Z$ is a difference of two chi-square distributions implying
that $P(Z = \infty) = 0$. Since $Z$ is continuous,  a known result
(see Lemma~2.11 in {vaart98statistics}) now implies that the convergence is
uniform.
\end{proof}

Elementary analysis now implies the following key corollary.
\begin{corollary}
\label{cor:limit}
Let $Z_N$ as in Lemma~\ref{lem:bound} and let $t_N$ be a sequence such that
$t_N \to \infty$. Then $\prob{Z_N \geq t_N} \to 0$.
\end{corollary}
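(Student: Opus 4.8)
The plan is to combine the uniform convergence established in Lemma~\ref{lem:bound} with the fact that the limit random variable $Z$ is almost surely finite. The statement to prove is that if $t_N \to \infty$, then $\prob{Z_N \geq t_N} \to 0$. The intuition is straightforward: $Z_N$ converges in distribution to a proper (finite) random variable $Z$, so the mass of $Z_N$ cannot escape to $+\infty$; since the threshold $t_N$ marches off to infinity, the probability that $Z_N$ exceeds it must vanish.

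First I would fix an arbitrary $\varepsilon > 0$. Because $P(Z = \infty) = 0$ (and in particular $Z$ is a proper random variable), I can choose a finite constant $M$ large enough that $\prob{Z \geq M} < \varepsilon / 2$; this is just the statement that the tail of a finite-valued distribution can be made small. Next I would invoke the uniform convergence from Lemma~\ref{lem:bound}: since $P(Z_N < t) \to P(Z < t)$ uniformly in $t$, there is an $N_0$ such that for all $N \geq N_0$ and all thresholds $t$ we have $\abs{\prob{Z_N \geq t} - \prob{Z \geq t}} < \varepsilon / 2$. The uniformity is exactly what lets me apply this bound at the moving threshold $t_N$ rather than at a fixed one.

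The two pieces are then assembled as follows. Since $t_N \to \infty$, I can enlarge $N_0$ if necessary so that $t_N \geq M$ for all $N \geq N_0$. For such $N$ I combine the estimates:
\[
\prob{Z_N \geq t_N} \leq \prob{Z \geq t_N} + \frac{\varepsilon}{2} \leq \prob{Z \geq M} + \frac{\varepsilon}{2} < \varepsilon,
\]
where the second inequality uses monotonicity of the tail $\prob{Z \geq \cdot}$ together with $t_N \geq M$. Since $\varepsilon$ was arbitrary, this shows $\prob{Z_N \geq t_N} \to 0$, which is the claim.

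The step I expect to carry the real weight is the appeal to \emph{uniform} convergence: if we only had pointwise (ordinary) convergence in distribution, evaluating the limiting tail at the drifting threshold $t_N$ would not be legitimate, and the mass of $Z_N$ could in principle concentrate just below $t_N$ and slip past it. The uniformity guaranteed by Lemma~\ref{lem:bound} removes exactly this difficulty by decoupling the choice of $N_0$ from the location of the threshold. Everything else is elementary analysis, as the preamble to the corollary already indicates.
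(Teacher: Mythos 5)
Your proof is correct and is precisely the ``elementary analysis'' the paper alludes to but never writes out: the paper derives the corollary directly from Lemma~\ref{lem:bound}, and your argument uses exactly the two ingredients that lemma provides --- the uniform convergence of $\prob{Z_N < t}$ to $\prob{Z < t}$ in $t$ (which is what licenses evaluation at the moving threshold $t_N$) and $\prob{Z = \infty} = 0$ (which makes the tail $\prob{Z \geq M}$ small for large $M$). Your $\varepsilon/2$-splitting with the monotonicity of the tail is the standard and intended way to assemble these, so there is nothing to add.
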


Let $D$ be a dataset with $N$ transactions.  We will denote by $\sfam{C}^N
\in \seg{o}$ to be the model with the lowest score. Note that $\sfam{C}^N$
is a random variable since it depends on $D$. We will show that $\sfam{C}^N$
converges into $\opt{o, p}$.

\begin{theorem}
\label{thr:converge}
Let $\sfam{O} = \opt{o, p}$. Then $P\pr{\sfam{O} = \sfam{C}^N} \to 1$
as $N$ approaches infinity.
\end{theorem}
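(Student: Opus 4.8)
The plan is to show that, with probability tending to $1$, $\sfam{O} = \opt{o, p}$ is the \emph{unique} minimiser of $\score{\cdot}$ over $\seg{o}$. Since $\seg{o}$ is a finite set (for fixed $K$ there are only finitely many antichains of segments), it suffices to fix an arbitrary $\sfam{C} \in \seg{o}$ with $\sfam{C} \neq \sfam{O}$, prove $\prob{\score{\sfam{O}} < \score{\sfam{C}}} \to 1$, and then take a union bound over the finitely many such $\sfam{C}$. Throughout I would use the decomposition $\score{\sfam{C}} = N\ent{q_D\pr{A; \sfam{C}}} + \tfrac{\log N}{2}\df{\sfam{C}}$, where $N = \abs{D}$; here the first term is just the negative log-likelihood $-\log p^*(D)$ rewritten through Eq.~\ref{eq:likelihood}, since the fitted distribution $q_D\pr{A; \sfam{C}}$ matches $q_D$ on every clique, so its entropy equals $\sum_i \ent{C_i; D} - \sum_i \ent{S_i; D}$.

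I would split the fixed $\sfam{C}$ into two cases according to whether $p$ lies in $\model{\sfam{C}}$. The delicate case is $p \in \model{\sfam{C}}$. Here Theorem~\ref{thr:optseg} gives $\sfam{O} \sqsubset \sfam{C}$, and since $\sfam{C} \neq \sfam{O}$ the downward closure of $\sfam{C}$ strictly contains that of $\sfam{O}$, so $\df{\sfam{C}} - \df{\sfam{O}} \geq 1$. Writing $h_1 = q_D\pr{A; \sfam{O}}$ and $h_2 = q_D\pr{A; \sfam{C}}$, we have $\model{\sfam{O}} \subseteq \model{\sfam{C}}$, so the larger model fits at least as well and $Z_N = N\ent{h_1} - N\ent{h_2} \geq 0$. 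By Lemma~\ref{lem:bound} this $Z_N$ converges to a finite limit and is therefore bounded in probability, whereas the penalty gap $t_N = \tfrac{\log N}{2}\pr{\df{\sfam{C}} - \df{\sfam{O}}} \to \infty$. Since $\score{\sfam{C}} - \score{\sfam{O}} = t_N - Z_N$, the event $\score{\sfam{C}} \leq \score{\sfam{O}}$ is exactly $\set{Z_N \geq t_N}$, whose probability vanishes by Corollary~\ref{cor:limit}. This is the heart of the argument: the $O(1)$ likelihood fluctuation (a Wilks-type difference of chi-squares) is outweighed by the $\Theta(\log N)$ separation in BIC penalties.

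The remaining case, $p \notin \model{\sfam{C}}$, is easier, because the likelihood gap itself is of order $N$. The fitted distribution $q_D\pr{A; \sfam{C}}$ is the maximum-entropy distribution matching the empirical clique marginals, so its entropy converges almost surely to $\ent{p\pr{A; \sfam{C}}}$, the entropy of the projection of $p$ onto $\model{\sfam{C}}$; by the information inequality this strictly exceeds $\ent{p}$ since $p \notin \model{\sfam{C}}$. Meanwhile $p \in \model{\sfam{O}}$ forces $\ent{q_D\pr{A; \sfam{O}}} \to \ent{p}$. Hence $\ent{h_2} - \ent{h_1}$ converges to a strictly positive constant $\delta$, and $\score{\sfam{C}} - \score{\sfam{O}} = N\pr{\ent{h_2} - \ent{h_1}} + \tfrac{\log N}{2}\pr{\df{\sfam{C}} - \df{\sfam{O}}}$ grows like $N\delta \to \infty$, dominating the $O(\log N)$ penalty regardless of its sign. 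Thus $\prob{\score{\sfam{O}} < \score{\sfam{C}}} \to 1$ in this case too.

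Combining the two cases and union-bounding over the finite family $\seg{o} \setminus \set{\sfam{O}}$ yields $\prob{\sfam{O} = \sfam{C}^N} \to 1$. I expect the main obstacle to be the first case, but the preceding results are tailored exactly for it: Lemma~\ref{lem:bound} supplies the tightness of $Z_N$ and Corollary~\ref{cor:limit} converts it into the required tail bound against a slowly growing threshold.
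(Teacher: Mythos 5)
Your proof is correct and follows essentially the same route as the paper's: the same two-case analysis (your split ``$p \in \model{\sfam{C}}$ versus $p \notin \model{\sfam{C}}$'' is, via Theorem~\ref{thr:optseg} and Lemma~\ref{lem:exponential}, equivalent to the paper's split ``$\sfam{O} \sqsubset \sfam{C}$ versus $\sfam{O} \not\sqsubset \sfam{C}$''), with Lemma~\ref{lem:bound} and Corollary~\ref{cor:limit} defeating the $O_P(1)$ likelihood fluctuation against the $\Theta(\log N)$ penalty gap in the nested case, and a strict order-$N$ entropy gap (your $\ent{p\fpr{A;\sfam{C}}}$ equals the paper's cross-entropy $-\sum_t p \log h$ by the moment-matching identity) in the non-nested case, followed by a union bound over the finite family $\seg{o}$.
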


\begin{proof}
Let $\sfam{C} \in \seg{o}$ such that $\sfam{O} \not\sqsubset \sfam{C}$.  Let $h
= p(A; \sfam{C}) \in \model{\sfam{C}}$.  Theorem~\ref{thr:optseg} now implies
that $h \neq p$. Note that 
\[
\begin{split}
\lim_N \frac{\score{\sfam{C}}}{N} &= -\sum_t p(A=t)\log h(A=t) \\
&> \ent{p} = \lim_N \frac{\score{\sfam{O}}}{N}.
\end{split}
\]
Hence $P\pr{\sfam{C} = \sfam{C}^N} \to 0$ since $\sfam{C}^N$ is the
collection of segments with the optimal score.

Assume now that $\sfam{O} \sqsubset \sfam{C}$, $\sfam{O} \neq \sfam{C}$. Let $h
= q_D\fpr{A; \sfam{O}}$ and $g = q_D\fpr{A; \sfam{C}}$. Set $d =
1/2(\df{\sfam{C}} - \df{\sfam{O}})$. Note that
\[
\score{\sfam{O}} - \score{\sfam{C}} = N\ent{h} - N\ent{g} - d\log N.
\]
Since $d\log N \to \infty$ as $N \to \infty$, Corollary~\ref{cor:limit}
implies that
\[
\prob{\score{\sfam{O}} \geq \score{\sfam{C}}} = \prob{N\ent{h} - N\ent{g} \geq d\log N} \to 0.
\]
Thus we must have $P\pr{\sfam{O} = \sfam{C}^N} \to 1$.
\end{proof}

We are now ready to prove the main theorem.
\begin{proof}[of Theorem~\ref{thr:asymptotic}]
The theorem follows if we can prove that $\score{o_1}$ is lower than
$\score{o_2}$ with probability $1$ as $N$ approaches infinity. Let
$\sfam{O}_i = \opt{o_i, p}$. We write
\[
\begin{split}
&P\pr{\score{o_1} \geq \score{o_2}} \\
& \quad \leq P\pr{\score{\sfam{O}_1} \geq \score{\sfam{O}_2}} + P\pr{\sfam{C}^N_1 \neq \sfam{O}_1 \lor \sfam{C}^N_2 \neq \sfam{O}_2} \\
& \quad \leq P\pr{\score{\sfam{O}_1} \geq \score{\sfam{O}_2}} + P\pr{\sfam{C}^N_1 \neq \sfam{O}_1} + P\pr{\sfam{C}^N_2 \neq \sfam{O}_2}.
\end{split}
\]
Theorem~\ref{thr:converge} implies that the second and the third terms converge
to 0. To annihilate the first term we use again Corollary~\ref{cor:limit}. Let
$g_i = q_D\fpr{A; \sfam{O}_i}$. Set $d = 1/2(\df{\sfam{O}_2} -
\df{\sfam{O}_1})$. Note that
\[
\score{\sfam{O}_1} - \score{\sfam{O}_2} = N\ent{g_1} - N\ent{g_2} - d\log N.
\]
Since $d\log N \to \infty$ as $N \to \infty$, Corollary~\ref{cor:limit}
implies that
\[
\begin{split}
&\prob{\score{\sfam{O}_1} \geq \score{\sfam{O}_2}}  \\
&\quad = \prob{N\ent{g_1} - N\ent{g_2} \geq d\log N} \to 0.
\end{split}
\]
This completes the proof.
\end{proof}

\end{document}